\documentclass[10pt, conference]{IEEEtran}

\usepackage{braket}
\usepackage{amsmath,amssymb,amsfonts,dsfont,amsthm}
\usepackage{algorithmic}
\usepackage{graphicx}
\usepackage{textcomp}
\usepackage{xcolor}
\usepackage{comment}
\usepackage{cite}
\usepackage{authblk}

\newcommand\blfootnote[1]{%
  \begingroup
  \renewcommand\thefootnote{}\footnote{#1}%
  \addtocounter{footnote}{-1}%
  \endgroup
}

\newtheorem{theorem}{Theorem}
\newtheorem{corollary}[theorem]{Corollary}

\newtheorem{lemma}[theorem]{Lemma}
\newtheorem{definition}[theorem]{Definition}
\newtheorem{proposition}[theorem]{Proposition}

\newlength{\blank}
\newcommand{\EE}{\mathbb{E}}
\newcommand{\RR}{\mathbb{R}}
\newcommand{\NN}{\mathbb{N}}

\newcommand{\cP}{\mathcal{P}}

\newcommand{\cB}{\mathcal{B}}
\newcommand{\cC}{\mathcal{C}}
\newcommand{\cD}{\mathcal{D}}

\newcommand{\cN}{\mathcal{N}}

\newcommand{\cX}{\mathcal{X}}
\newcommand{\cY}{\mathcal{Y}}

\begin{document}

\title{Deterministic identification for Bernoulli channels and related channels with continuous input}

\author[1]{Pau Colomer\textsuperscript{$\,\star$,}}
\author[2,3]{Christian Deppe\textsuperscript{$\,\dagger$,}}
\author[1,3,4]{Holger Boche\textsuperscript{$\,\ddagger$,}}
\author[5,6]{Andreas Winter\textsuperscript{$\,\S$,}}

\affil[1]{\small\it Chair of Theoretical Information Technology, School of Computation, Information and Technology, \protect\\ Technische Universit\"at M\"unchen, Theresienstra{\ss}e 90, 80333 M\"unchen, Germany.\vspace{2pt}}

\affil[2]{\small\it Institute for Communications Technology, Faculty of Electrical Engineering, Information Technology, \protect\\and Physics, Technische Universit\"at Braunschweig, Schleinitzstraße 22, 38106 Braunschweig, Germany.\vspace{2pt}}

\affil[3]{\small\it 6G-life, 6G research hub, Schleinitzstra{\ss}e 22, 38106 Braunschweig, and Arcisstra{\ss}e 21, 80333 M\"unchen, Germany.\vspace{2pt}}

\affil[4]{\small\it Cluster of Excellence, CeTI, Technische Universität Dresden, Georg-Schumann-Stra{\ss}e 9, 01187 Dresden, Germany.\vspace{2pt}}

\affil[5]{\small\it Chair of Quantum Information and Computing, Department Mathematik/Informatik \protect\\ Abteilung Informatik, Universit\"at zu K\"oln, Weyertal 121, 50931 Köln, Germany.\vspace{2pt}}

\affil[6]{\small\it ICREA \&{} Grup d'Informaci\'o Qu\`antica, Departament de F\'isica, \protect\\ Universitat Aut\`onoma de Barcelona, 08193 Bellaterra (Barcelona), Spain.}

\date{18 February 2026} 

\maketitle

\pagestyle{plain}

\begin{abstract}
For memoryless channels with continuous input alphabets, deterministic identification (DI) typically exhibits a linearithmic ($n\log n$) message growth. However, the exact DI capacity has long remained open due to a persistent gap between the best known achievability and converse bounds. This gap was recently closed for AWGN channels via a novel code construction optimising the ``galaxy'' codes. Here, we extend this approach to the Bernoulli channel and subsequently to any channel $W$ whose image contains a continuous curve of output probability distributions, and hence admits a reduction to the Bernoulli channel restricted to a subinterval of inputs. As a consequence, we prove that the converse bound is tight and establish $\dot{C}_{\text{DI}}(W) = \frac 12$ for this broad class of channels, thereby closing the long-standing capacity gap. 
A similar gap was also observed for the DI rate-reliability tradeoff. We analyse the tradeoff between rate and error of the proposed code and derive improved lower bounds on the reliability function, approaching the converse at leading order in the regime of small error exponents.
\end{abstract}

\begin{IEEEkeywords}
Identification via channels;
deterministic identification;
identification capacity;
Bernoulli channel;
Poisson channel;
reliability function.
\end{IEEEkeywords}

\blfootnote{
{\textsuperscript{$\star$}pau.colomer@tum.de, \hspace{20pt} \textsuperscript{$\dagger$}christian.deppe@tu-bs.de,}\vspace{1pt}

{\textsuperscript{$\ddagger$}boche@tum.de, \hspace{45pt}\textsuperscript{$\S$}andreas.winter@uni-koeln.de.}}

\section{Introduction}
Identification is a post-Shannon communication paradigm in which a sender wishes to convey a message $i\in[N]$, while the receiver, rather than attempting to fully decode $i$ as in Shannon’s classical transmission setting \cite{Shannon:TheoryCommunication}, only tests whether the transmitted message coincides with a particular message $j\in[N]$ of its choice. In other words, the decoding task is reduced to a binary decision problem: determine reliably whether $i=j$ or $i\neq j$.

This protocol was first proposed by Ahlswede and Dueck in \cite{AD:ID_ViaChannels}. They showed that the length of the identifiable messages scaled exponentially faster with the block length $n$ than the length of the transmittable messages in Shannon's paradigm. Indeed, while the $M$ decodable messages satisfy $\log M\sim Rn$, with $R$ the communication rate, the $N$ identifiable messages satisfy $\log N\sim 2^{Rn}$ \cite{AD:ID_ViaChannels,HanVerdu:ID}. This surprising exponential improvement motivates the relevance of identification for future large-scale network systems \cite{6G_Book}, specially in the field of event-driven or goal-oriented communications \cite{Goal_Oriented_Coms}.

It was already observed in \cite{AD:ID_ViaChannels,AC:DI} that the exponential improvement described above is intrinsically linked to the use of randomness at the encoder. In many applications, however, randomness may be costly, difficult to generate reliably, or undesirable due to implementation constraints. This motivates the study of identification schemes without randomization, i.e., \emph{deterministic identification (DI)}.

DI was first investigated for discrete memoryless channels (DMCs) \cite{AC:DI,SPBD:DI_power}. In that setting, the exponential identification advantage disappears: DI codes can only support a linear growth of the message set, i.e.,
$\log N_{\text{DMC}}\sim nR'$,
although the achievable DI rates $R'$ may exceed the corresponding transmission capacity \cite{SPBD:DI_power}. 
Surprisingly, the situation changes fundamentally when the channel input alphabet is continuous. Indeed, for the Gaussian channel $G$ and the Poisson channel $P$ (under peak/average power constraints), it was shown that the size of the message set under DI can grow as
\(
\log N_{G,P} \sim R' n\log n,
\)
see \cite{SPBD:DI_power, DI-fading, RRGauss-arXiv, DI-poisson_mc}. Following terminology from computer science, we refer to this $n\log n$ scaling as \emph{linearithmic}.

The linearithmic scaling is not limited to the specific channels above, but is rather a prevalent feature of channels with continuous input. Indeed, in \cite{CDBW:DI_classical,CDBW:Reliability-TCOM}, it was shown that for the family of channels with continuous input and discrete output alphabets linearithmic DI codes also exist. 

\begin{definition}\label{def:DI_code}
An $(n,N,\lambda_1,\lambda_2)$ DI code over $n$ uses of a channel $W$ is a family of pairs $\{(u_i,\cD_i) : i\in[N]\}$ with $u_i\in\cX^n$ input sequences and $\cD_i\subset\cY^n$ output regions such that, for all distinct $i, j\in [N]$, and $\lambda_1,\lambda_2>0$:
\[
  W^n(\cD_i|u_i)\ge 1-\lambda_1 \quad\text{and}\quad W^n(\cD_j|u_i)\le\lambda_2.
\]
\end{definition}

The definition highlights another fundamental distinction between Shannon-style message transmission and identification. While in transmission the receiver produces an estimate $\hat{i}$ of the transmitted message $i$, and an error occurs whenever $\hat{i}\neq i$, in identification we have two different types of errors, analogous to the two error events in binary hypothesis testing. The first type is a \emph{missed identification} and it happens when the messages sent and tested are the same but the test rejects the true message. The second type is a \emph{false identification} and it occurs when the messages sent and tested are different but the test incorrectly accepts the wrong message. The parameters $\lambda_1$ and $\lambda_2$ bound respectively the two errors. A good $(n,N,\lambda_1,\lambda_2)$ DI code must control both types of errors simultaneously.
A number $\dot R>0$ is said to be an \emph{achievable linearithmic rate} if there exists a sequence of reliable
$(n,N^{(n)},\lambda_1^{(n)},\lambda_2^{(n)})$ 
DI codes such that
\begin{equation}\label{eq:achievable_rate_intro}
\lambda_1^{(n)},\lambda_2^{(n)} \rightarrow  0,
\,\,\, \text{and}\,\,\,
\dot R(n):=\frac{\log N^{(n)}}{n\log n} \rightarrow \dot{R},
\end{equation}
as $n\to \infty$.
The (linearithmic) DI capacity of a channel is its largest achievable rate:
\begin{equation}\label{eq:def_capacity}
\dot{C}_{\mathrm{DI}}(W)=
\max\bigl\{\dot{R}\,:\,\dot{R}\text{ is achievable}\bigr\}.
\end{equation}
We use the dot on top of rates and capacities to indicate that they are calculated in the linearithmic scale.

A central open problem in DI has been the persistent gap between the best known lower and upper bounds on $\dot C_{\text{DI}}$ for channels with continuous input alphabets. In particular, for additive white Gaussian noise (AWGN) channels $G$ \cite{SPBD:DI_power,DI-fading}, the Poisson channel $P$ \cite{DI-poisson_mc}, and the Bernoulli channel $B$ \cite{CDBW:DI_classical}, the capacity was bounded by
\begin{equation}\label{eq:gap_intro}
    \frac14 \leq \dot C_{\text{DI}}(G), \dot C_{\text{DI}}(P), \dot C_{\text{DI}}(B) \leq \frac12.
\end{equation}
Furthermore, a similar gap was found for the general family of arbitrary channels $W$ with discrete output. To be precise, let $W:\cX\to\cY$ be a channel with input alphabet $\cX$ and output $\cY$, and denote its channel image by
\begin{equation}
W(\cX):=\{W(\cdot|x): x\in\cX\}\subseteq\cP(\cY),
\end{equation}
where $\cP(\cY)$ is the space of probability measures on $\cY$. Also, let $d_M$ denote the Minkowski dimension \cite{Fraser:dimensions,Robinson:dimensions} of this output set $W(\cX)$. Then, it is shown in \cite{CDBW:DI_classical}, that
\begin{equation}\label{eq:gap2_intro}
    \frac{d_M}{4} \leq \dot C_{\text{DI}}(W)\leq  \frac{d_M}{2}.
\end{equation}
Only in a few special cases, where the structure of optimal decoders is explicit, the capacity was known \cite{CDBW:DI_classical,qhtl-arXiv}. 

Recently, an optimised DI code construction, inspired by \cite{galaxy-codes}, was shown to achieve the upper bound for the additive white Gaussian noise (AWGN) channel $G$, thereby establishing $\dot C_{\text{DI}}(G)=\frac12$ and closing the Gaussian capacity gap \cite{CDBW:Gauss}. This naturally raises the question of whether the upper bound in Equation~\eqref{eq:gap_intro} is also achievable for the Poisson channel, the Bernoulli channel, and broader classes of continuous-input discrete-output channels.

In this paper, we answer this question in the affirmative. Building on the underlying geometric ideas of the Gaussian construction in \cite{CDBW:Gauss} (in a non-trivial and somewhat counter-intuitive way), we propose a new DI coding scheme for the Bernoulli channel $B$ (see the channel model in Section \ref{sec:channel_model}) and prove that it achieves the converse bound in Equation~\eqref{eq:gap_intro}, thereby establishing that $\dot C_{\text{DI}}(B)=\frac12$ (see Section \ref{ssec:opt_code}). 

Then, in Section~\ref{ssec:restricted_bern}, we extend the construction to the restricted Bernoulli channel and, via the reducibility argument Lemma~\ref{lemma:B_to_W}, to any channel $Q$ whose image is the union of a finite number of continuous curves and a residual set of (upper) Minkowski dimension at most $d_M=1$, which include the AWGN and Poisson cases. For all these, we obtain an achievable linearithmic rate of $\frac12$ which, combined with the converse bounds in Equations \eqref{eq:gap_intro} and \eqref{eq:gap2_intro}, allows us to establish for the first time the exact capacities
\begin{equation}
\dot C_{\text{DI}}(P)=\dot C_{\text{DI}}(B)=\dot C_{\text{DI}}(Q)=\frac12,
\end{equation}
thereby closing the capacity gap for this broad class of channels.
The results also reproduce the main capacity result in \cite{CDBW:Gauss} for AWGN channels: $\dot C_{\text{DI}}(G)=\frac12$.

For the Poisson channel, we additionally include an explicit capacity-achieving construction in Section \ref{sec:poisson}. 

In Section \ref{sec:RR}, we develop a rate-reliability study of the proposed code, observing that, in the regime of small error exponents, the lower bound on the rate-reliability function matches the known upper bound at leading order. We conclude in Section \ref{sec:conclusions} with a final discussion. 

These results close the long-standing capacity gap for DI in the linearithmic regime for a broad class of continuous-input channels, and yield improved achievability bounds for the associated reliability functions which are tight at first order.


\section{Channel model}\label{sec:channel_model}
We start by considering communication over the \emph{Bernoulli channel} $B$. Given an input $x\in[0,1]$, this channel outputs a binary symbol $y\in\{0,1\}$ according to the Bernoulli distribution with parameter $x$:
\begin{equation}\label{eq:BernoulliChannel}
B(y|x) = B_x(y) 
= \begin{cases}
    x   & \text{if }\, y=1, \\
    1-x & \text{if }\, y=0.
  \end{cases}
\end{equation}
On block length $n$ we have a continuous set of inputs $x^n = x_1\dots x_n \in[0;1]^n$ which are points in the unit hypercube of dimension $n$. On the other hand, we have a finite set of $2^n$ possible outputs $y^n\in\{0,1\}^n$ (the vertices of the afore-mentioned cube). 

The Bernoulli channel is a significant example not only in its own right, but also because it allows the construction of DI codes for other channels. In particular, restricting its input alphabet to an interval $\cX=[a,b]\subset[0,1]$, with $0\leq a < b \leq 1$, yields the \emph{restricted Bernoulli channel}, denoted $B|_{[a,b]}$, which we will show has linearithmic DI capacity $\dot C_{\text{DI}}(B|_{[a,b]})=\frac12$. As observed in \cite{CDBW:DI_classical}, broad classes of channels can be reduced to this restricted model via suitable classical pre- and post-processing.

To formalise this reduction principle, assume that the channel's image $W(\cX)$ contains a non-trivial continuous curve of output distributions, we call these channels \emph{Bernoulli-reducible}. Then one may choose a binary-valued post-processing channel $V:\cY\to\{0,1\}$ such that the reduced channel
\[
W':=V\circ W:\cX\to\{0,1\}
\]
is non-constant. Since $W'$ has binary output, its channel image can be identified with a subset of $[0,1]$ via the Bernoulli parameter $p(x):=W'(1|x)$. By continuity, the set $\{p(x):x\in\cX\}$ contains a non-trivial interval $[a,b]\subset[0,1]$, and hence $W'$ simulates the restricted Bernoulli channel $B|_{[a,b]}$. Applying any DI code for $B|_{[a,b]}$ to $W'$ and concatenating with the post-processing $V$ yields a DI code for $W$. This leads to the following consequence.

\begin{lemma}[cf.\ Sec.~V.A in \cite{CDBW:DI_classical}]\label{lemma:B_to_W}
If $W(\cX)$ contains a continuous curve, then there exists an interval $[a,b]\subset[0,1]$ such that $W$ can simulate the restricted Bernoulli channel $B|_{[a,b]}$ via suitable classical pre- and post-processing. Consequently,
\[
\dot{C}_{\text{DI}}(W) \geq \dot{C}_{\text{DI}}(B|_{[a,b]}).
\]
\end{lemma}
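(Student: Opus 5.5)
The plan is to make the reduction sketched just before the statement precise in two steps. First, from the curve we produce a binary post-processing $V$ and an interval $[a,b]$ of achievable Bernoulli parameters. Then we check that DI codes transfer from $B|_{[a,b]}$ to $W$ with the same block length, message number and error parameters.

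\emph{Step 1 (finding $V$ and $[a,b]$).} Let $\gamma:[0,1]\to W(\cX)$ be a continuous, non-constant curve in $\cP(\cY)$, with $\gamma(t)=W(\cdot|x(t))$ for suitable inputs $x(t)\in\cX$; note that no continuity of $t\mapsto x(t)$ is needed. Pick $t_0,t_1$ with $\gamma(t_0)\neq\gamma(t_1)$. There is then a measurable set $A\subseteq\cY$ with $\gamma(t_0)(A)\neq\gamma(t_1)(A)$; for discrete $\cY$ a single point suffices. Define the deterministic post-processing $V(1|y)=\mathds{1}_A(y)$, and set $p(t):=\gamma(t)(A)=W'(1|x(t))$. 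The main technical point is continuity of $p$. If the curve is continuous in total variation (the natural topology, and the one in which $d_M$ is measured), then $|p(t)-p(s)|\le \|\gamma(t)-\gamma(s)\|_{TV}$ and $p$ is continuous. If only weak continuity is assumed, I would instead take $V(1|y)=f(y)$ for a continuous function $f:\cY\to[0,1]$ separating $\gamma(t_0)$ from $\gamma(t_1)$, which is still a valid stochastic post-processing. In either case the intermediate value theorem shows that $p([t_0,t_1])$ contains $[a,b]:=[\min(p(t_0),p(t_1)),\max(p(t_0),p(t_1))]$, with $a<b$.

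\emph{Step 2 (pre-processing).} For each $q\in[a,b]$ choose $t_q$ with $p(t_q)=q$ and set $\phi(q):=x(t_q)\in\cX$. This defines a deterministic pre-processing map. Then $V\circ W(\cdot|\phi(q))=B_q$ exactly, so $V\circ W\circ\phi=B|_{[a,b]}$, and by memorylessness the same identity holds letterwise for $n$ uses.

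\emph{Step 3 (code transfer).} Take an $(n,N,\lambda_1,\lambda_2)$ DI code $\{(u_i,\cD_i)\}$ for $B|_{[a,b]}$. Define codewords $\tilde u_i:=\phi^n(u_i)\in\cX^n$; these are deterministic, which is essential because DI forbids randomised encoding. Define decoding regions by composing $\cD_i$ with $V^n$. When $V$ is deterministic, $\tilde\cD_i:=\{y^n:(\mathds{1}_A(y_k))_k\in\cD_i\}$. When $V$ is stochastic, use the test $\mathrm{Pr}_{V^n}[\cD_i\mid y^n]$; if tests must be sets, randomised decision regions are standard in DI (see \cite{CDBW:DI_classical}). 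For every pair $i,j$ we have
\[
W^n(\tilde\cD_j|\tilde u_i)=B^n(\cD_j|u_i),
\]
so both error conditions are preserved. Hence every achievable linearithmic rate for $B|_{[a,b]}$ is achievable for $W$, which gives $\dot C_{\text{DI}}(W)\ge\dot C_{\text{DI}}(B|_{[a,b]})$.

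The obstacle I expect is Step 1: making precise what ``continuous curve'' means, and ensuring that the separating post-processing yields a continuous scalar functional along the curve. I would handle this with the total-variation argument above, falling back to the continuous-test-function choice under weaker topologies. The remaining steps are bookkeeping.
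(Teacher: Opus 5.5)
Your proposal is correct and follows the paper's own argument. You take a binary post-processing $V$ that makes $V\circ W$ non-constant along the curve, use continuity and the intermediate value theorem to obtain a non-trivial interval $[a,b]$ of Bernoulli parameters, and transfer any DI code for $B|_{[a,b]}$ to $W$ by mapping codewords through a deterministic choice of preimages and pulling the decoding sets back through $V^n$; this leaves $n$, $N$, $\lambda_1$ and $\lambda_2$ unchanged and so gives $\dot C_{\text{DI}}(W)\geq\dot C_{\text{DI}}(B|_{[a,b]})$.

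You also make explicit what the paper leaves implicit: the continuity of $p$, the pre-processing map $\phi$, and the decoders. One caveat: your weak-topology fallback with stochastic $V$ produces randomized tests, which Definition~\ref{def:DI_code} (decoding regions $\cD_i\subset\cY^n$) does not literally allow. That fallback is never needed in the paper's settings (discrete output, or total-variation-continuous curves such as the Gaussian family), where the deterministic choice $V=\mathds{1}_A$ already works.
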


Notice the power of this reduction: if we construct a DI code achieving the upper bound in Equation \eqref{eq:gap_intro} for the restricted Bernoulli channel, i.e.\ $\dot{C}_{\text{DI}}(B|_{[a,b]})\geq \frac12$, then Lemma~\ref{lemma:B_to_W} implies the achievability bound $\dot{C}_{\text{DI}}(W)\geq \frac12$ for every channel whose image contains a non-trivial continuous curve, i.e., every Bernoulli-reducible channel. This family includes continuous-input and discrete-output channels $Q$ with $d_M=1$, as well as continuous-output models such as the Gaussian channel $G$ and counting-output models such as the Poisson channel $P$ (under standard input constraints). Since in these settings a matching converse bound is available, the reduction lemma establishes their exact DI capacity.


\section{Preliminaries and strategy intuition}
The first DI code constructions were based on typicality arguments, in particular on the \emph{hypothesis testing lemma} \cite{CDBW:DI_classical}. This lemma shows that if one can find code words $\{u_i\}_{i=1}^N$ whose induced output distributions $W_{u_i}$ are sufficiently separated, then a good DI code can be obtained by choosing decoding regions as entropy-typical sets \cite{CK:book2011}. Concretely, for $\delta>0$ one defines
\begin{equation}\label{eq:entropy-typical-set}
\mathcal{T}_{u_i}^\delta \! := \left\{ y^n\in\mathcal{Y}^n \!: \left| \log W_{u_i}(y^n) \!+\! H(W_{u_i}) \right| \leq \delta\sqrt{n} \right\},
\end{equation}
where $H(W_{u_i})$ denotes the Shannon entropy of the output distribution induced by the input sequence $u_i$.

For the Bernoulli channel, the distance needed between output probability distributions translates directly to a minimum Euclidean distance between the corresponding input sequences. In particular, one can construct DI codes by taking the centres of an Euclidean ball packing of radius on the order of $n^{1/4}$ in $[0,1]^n$, together with their associated typical decoding regions. Analysing such a construction yields the lower bound $\dot C_{\text{DI}}(B)\geq \frac14$ appearing in \eqref{eq:gap_intro}.

The typicality-based approach above, while very general, does not achieve higher linearithmic rates. However, a new code construction was recently proposed for DI over AWGN channels based on a purely geometric technique, moving away from typicality arguments \cite{CDBW:Gauss}, which improved the rates obtained with the previous strategy. The cornerstone property there is that the (high-dimensional) Gaussian noise is, with high probability, nearly orthogonal to any fixed direction. Using a carefully designed multi-layer geometric arrangement of code words, one obtains explicit DI codes achieving the upper bound, and hence $\dot C_{\text{DI}}(G)=\frac12$.

At first glance, one might expect that the near-orthogonality property exploited in \cite{CDBW:Gauss} is specific to the AWGN setting, since Gaussian noise is independent of the input and isotropic (distributed with the same probability in all directions). In contrast, the noise in the Bernoulli channel depends strongly on the input sequence: the output always lies on a vertex of the hypercube, and vertices closer to the input are more likely than those farther away. Nevertheless, a form of the same concentration phenomenon persists: the projection of the Bernoulli noise (in high-dimension settings) onto any fixed direction is unlikely to be large.

Let us start by writing the noise vector of a Bernoulli channel in an additive way. For each input letter $x_i\in[0;1]$, the output can be written as the following random variable:
\begin{equation}
Y_i=x_i+Z_i, 
\,\,\text{where}\,\,
Z_i=\begin{cases}
-x_i    &\text{with prob. } 1-x_i, \\
1-x_i  &\text{with prob. } x_i.  
\end{cases}
\end{equation}
Then, for any input sequence $x^n=x_1,\dots,x_n$, the output can be written as $Y^n=x^n+Z^n$ with $Z^n=Z_1,\dots,Z_n$. Let us express the noise as the vector $\vec Z=(Z_1,\dots,Z_n)$.

\begin{proposition}\label{proposition:projective_property}
Let $\vec{v}\in\RR^n$ be a vector with $\|\vec v\|>0$, and let $\vec{Z}=(Z_1,\dots,Z_n)$ denote the additive noise induced by the Bernoulli channel for some input sequence $x^n\in[0;1]^n$. Then, for all $t>0$, the orthogonal projection $\Pi_{v}\vec Z$ of the noise vector onto $\text{span}(\vec{v})$, satisfies:
\[
\Pr\left\{\|\Pi_{v}\vec Z\|>t\right\}\leq 2\exp \left(-2t^2\right).
\]
\end{proposition}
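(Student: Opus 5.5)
The plan is to reduce the statement to Hoeffding's inequality for a weighted sum of independent, bounded, zero-mean random variables.

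First, I compute the projection explicitly. Write $\vec e=\vec v/\|\vec v\|$ for the unit vector in the direction of $\vec v$. Then
\[
\Pi_v\vec Z=\langle \vec e,\vec Z\rangle\,\vec e,
\qquad\text{so}\qquad
\|\Pi_v\vec Z\|=\bigl|S\bigr|,\quad S:=\sum_{i=1}^n e_i Z_i ,
\]
and $\sum_i e_i^2=1$.

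Second, I record the relevant properties of the summands. The $Z_i$ are independent, since the channel is memoryless. Each is centred, because
\[
\EE Z_i=x_i(1-x_i)-(1-x_i)x_i=0 .
\]
Each $Z_i$ also takes values in the interval $[-x_i,1-x_i]$, whose length is exactly $1$, regardless of the input letter $x_i$. Consequently $e_iZ_i$ is zero-mean and lies in an interval of length $|e_i|$.

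Third, I apply Hoeffding's inequality to $S=\sum_i e_iZ_i$:
\[
\Pr\{|S|>t\}\le 2\exp\!\left(-\frac{2t^2}{\sum_i (b_i-a_i)^2}\right)=2\exp\!\left(-\frac{2t^2}{\sum_i e_i^2}\right)=2\exp(-2t^2).
\]
This is exactly the claimed bound. If one prefers to derive it rather than cite it, the route is Hoeffding's lemma $\EE e^{sY}\le e^{s^2(b-a)^2/8}$ applied to each term, multiplied over $i$ by independence, followed by a Chernoff bound optimised at $s=4t$ and a union bound over the two tails.

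I do not expect a real obstacle here. The one point that needs care is noticing that the range length of each $Z_i$ is uniformly $1$, independent of $x_i$. It is this fact that makes the bound hold for all inputs simultaneously, even though the noise distribution itself depends on the input.
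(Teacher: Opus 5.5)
Your proposal is correct and follows the paper's proof essentially step for step. Like the paper, you normalise $\vec v$, identify $\|\Pi_v\vec Z\|$ with $|\sum_i e_iZ_i|$, observe that the summands are independent and centred with each $Z_i$ confined to an interval of length $1$, and then apply Hoeffding with $\sum_i e_i^2=1$. Your phrasing that $e_iZ_i$ lies in an interval of length $|e_i|$ is, if anything, slightly more careful than the paper's stated range $[-x_i|e_i|,(1-x_i)|e_i|]$, which is literally correct only when $e_i\ge 0$.
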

\begin{proof}
Take $\vec v$ in any direction, and let $\vec e:=\vec v/\|\vec v\|$ be its normalization, i.e., $\|\vec e\|^2=\sum_{i=1}^n e_i^2=1$. Then, the projection of the noise $\vec Z$ onto $\text{span}(\vec v)$ [or, equivalently, $\text{span}(\vec e)$] is
\begin{equation}\label{eq:orth_projection}
\Pi_{v}\vec Z:=\frac{\langle\vec Z,\vec v\rangle}{\|\vec v\|^2}\vec v=\langle\vec Z,\vec e\rangle\vec e.
\end{equation}

Hoeffding's inequality \cite{Hoeffding_inequality} states that given a sum of independent random variables $X^n=X_1+\dots+X_n$, with each $X_i\in[a_i,b_i]$, then for all $t>0$
\begin{equation}\label{eq:Hoef_bound}
\!\!\Pr\left\{|X^n-\EE[X^n]|\!\geq\! t\right\}\leq 2\exp\left(-\frac{2t^2}{\sum_{i=1}^n \left|b_i-a_i\right|^2}\right)\!.
\end{equation}

Let  
\(
S_v:=\langle\vec Z,\vec e\rangle=\sum_{i=1}^n Z_ie_i=:\sum_{i=1}^n S_i
\)
be the normalized inner product. Observe that $S_v$ is a sum of independent random variables $S_i$, which satisfy $S_i\in[-x_i|e_i|,(1-x_i)|e_i|]$. Then, noticing first that the mean is null
\begin{equation}
    \EE[S_v]=\sum_{i=1}^n x_i(1-x_i)e_i-(1-x_i)x_ie_i=0,
\end{equation}
and that 
\begin{equation}
\begin{split}
\sum_{i=1}^n(b_i-a_i)^2&=\sum_{i=1}^n \left[(1-x_i)|e_i|-(-x_i)|e_i|\right]^2\\
&=\sum_{i=1}^n |e_i|^2=1,
\end{split}
\end{equation}
we can apply Hoeffding's bound in Equation~\eqref{eq:Hoef_bound} obtaining, for all $t>0$,
\begin{equation}
\Pr\left\{\|\Pi_{v}\vec Z\|>t\right\}=\Pr\left\{|S_v|>t\right\}\leq 2\exp \left(-2t^2\right),
\end{equation}
where the first inequality follows from $\|\Pi_{ v}\vec Z\|\!=\!\left\|S_v\vec e\right\|\!=\!|S_v|$. The proof is complete.
\end{proof}

The proposition shows that although the Bernoulli noise depends on the input sequence and is highly non-isotropic (some directions are much more probable than others), its projection onto any fixed direction still concentrates sharply around zero. In other words, the component of the noise along a prescribed direction is typically small. In the next section we show that this concentration property alone suffices to construct a multi-layer geometric DI code for the Bernoulli channel, in the spirit of \cite{CDBW:Gauss}.


\section{Optimal DI code construction for the Bernoulli channel}\label{sec:optimal_bern}
We show in this section that a modification of the universal DI code for Gaussian channels proposed in \cite{CDBW:Gauss} can also be applied to the Bernoulli channel $B$. We begin in Section~\ref{ssec:primitive} with a primitive construction inspired by the Gaussian scheme, followed by an error analysis in Section~\ref{ssec:error_analysis}. In Section~\ref{ssec:opt_code} we show that an expurgation of the primitive construction yields a reliable DI code for the Bernoulli channel. Moreover, Section~\ref{ssec:restricted_bern} shows that the same construction extends to any restricted Bernoulli channel $B|_{[a,b]}$ and, by Lemma~\ref{lemma:B_to_W}, to all Bernoulli-reducible channels. Together with the converse results stated in the introduction, these results allow us to establish the exact capacity value for any channel $Q$ with continuous input and discrete output with $d_M=1$ (containing the Bernoulli and restricted-Bernoulli channels), and for the Poisson $P$ and Gaussian channels $G$, finally closing the capacity gap in the Poisson case, and reproducing the recent results in \cite{CDBW:Gauss} for the Gaussian.

\subsection{Primitive code book description}\label{ssec:primitive}
In this section we construct a primitive code book for the Bernoulli channel. We call it \emph{primitive} because not all code words satisfy the power constraint, meaning that some of them are unusable for DI over the Bernoulli channel. In Section \ref{ssec:opt_code}, an expurgation of the construction proposed here is performed to produce a true code for the Bernoulli channel (with all code words inside the available input space).

The primitive code book is a multi-layer geometric arrangement of points on the surface of a high-dimensional sphere equivalent to the universal construction for AWGN channels in \cite{CDBW:Gauss}. We construct it sequentially through a layered system satisfying the following minimum projective separation condition.

\begin{definition}\label{def:angle-dense}
Given $d>0$, a set of points $\{u_j\}_{j=1}^{N}$ on the surface of an $n$-dimensional ball $\cB(r,n)$ of radius $r$ is called $d$-\emph{angle-dense} if for every pair of distinct points $u_j\neq u_k$ it holds that
\[
\|\Pi_{\vec{u}_k}\vec{u}_j-\vec{u}_k\|\geq d.
\]
\end{definition}

Let $\vec o_{s_0}=(\frac12,\dots,\frac12)$ be the centre of the construction (note that it is also the centre of the $n$-dimensional cube defining the input space). Then, let $\{\vec o_{s_1}\}_{s_1=1}^{N_1}$ be the $N_1$ points in an $d$-angle-dense arrangement in the $n$-dimensional ball $\cB_{s_0}(r_1,n)$ of radius $r_1$ centred at $\vec o_{s_0}$. This spherical arrangement constitutes the \emph{first layer} of the code. 

Around each element $\vec o_{s_1}$ of the first layer, we construct a \emph{second layer} system, consisting of another $d$-angle-dense arrangement of points $\{\vec o_{s_1,s_2}\}_{s_2=1}^{N_2}$ on the surface of the ball 
\(
\cB_{s_1}(r_2,n-1)\in\text{span}(\vec v_{s_0\to s_1}^\perp),
\)
where $\vec v_{s_o\to s_1}:=\vec o_{s_1}-\vec o_{s_0}$. In other words, the second layer arrangement is constructed in the subspace orthogonal to the vector connecting the centre of the code $\vec o_{s_0}$ and the first-layer element $\vec o_{s_1}$, which is the centre of the corresponding second-layer system, thereby losing one dimension.

Around each second-layer element, a further $d$-angle-dense system is added, forming a third layer, and so on. At layer $\ell$, we have elements $\{\vec o_{s^\ell}\}_{s_1,\dots,s_\ell=1}^{N_1,\dots, N_\ell}$ uniquely defined by the string $s^\ell=s_1,\dots,s_\ell$ that denotes the particular point at each layer around which the element $\vec o_{s^\ell}$ can be found. In other words, defining the vectors $\vec v_{s_{\ell-1}\to s_\ell}:=\vec o_{s^\ell}-\vec o_{s^{\ell-1}}$ at each layer $\ell$, we have that 
\(
    \vec o_{s^\ell}=\sum_{i=1}^\ell\vec v_{s_{i-1}\to s_{i}}.
\)

Around each $\ell$-layer element $\vec o_{s^\ell}$, another $d$-angle-dense arrangement of points $\{\vec o_{s^{\ell+1}}\}_{s_{\ell+1}=1}^{N_{\ell+1}}$ on the surface of the ball 
\(
\cB_{s^\ell}(r_\ell,n-\ell+1)\in\text{span}(\vec v_{s_0\to s_1},\dots,\vec v_{s_{\ell-1}\to s_\ell})^\perp
\)
is added. See Figure \ref{fig:subspaces} for visual intuition on the notation. 

\begin{figure}
    \centering
    \includegraphics[width=0.95\linewidth]{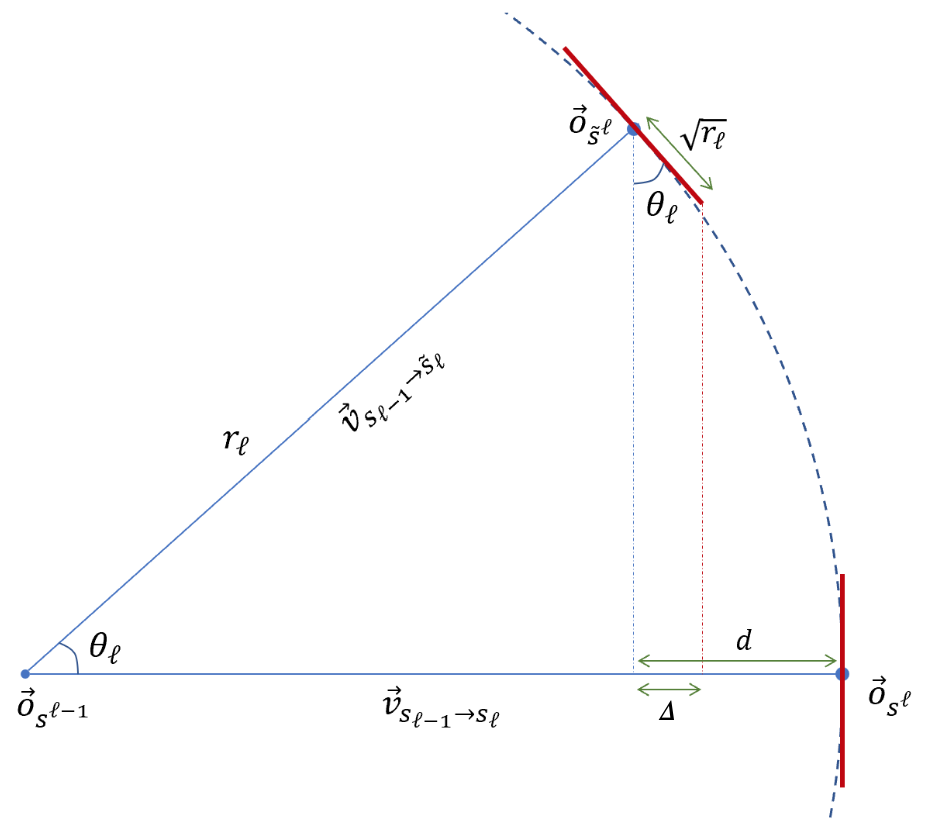}
    \caption{Schematic structure of a $d$-angle-dense arrangement at layer $\ell$ around the point $\vec o_{s^{\ell-1}}$ (which, at the same time, is a point in the $\ell-1$-layer). Observe that the orthogonal projection of the $\ell$ layer point $\vec o_{\tilde s^\ell}$ onto the vector $\vec v_{s_{\ell-1}\to s_\ell}$ is at least $d$. The two red lines, respectively orthogonal to the vectors $\vec v_{s_{\ell-1}\to s_\ell}$ and $\vec v_{s_{\ell-1}\to \tilde s_\ell}$ represent the regions where the $\ell+1$ layer will be arranged. Notice that any point in the $\ell+1$ layer around $\vec o_{\tilde s^\ell}$ will have an orthogonal projection onto $\vec v_{s_{\ell-1}\to s_\ell}$ at a distance at most $\Delta$ from $\Pi_{s_{\ell-1}\to s_\ell}\vec o_{\tilde s^\ell}$ (see Proposition~\ref{proposition:concentration}).}
    \label{fig:subspaces}
\end{figure}

\begin{definition}\label{def:primitive_code book}
Given a fixed number of layers $L$ (independent of $n$), the \emph{primitive code book} is the set of $N_P=\prod_{\ell=1}^LN_\ell$ points in the innermost layer of the construction above:
\[
\left\{\vec o_{s^L}\in\mathrm{span}(\vec v_{s_0\to s_1},\dots,\vec v_{s_{L-1}\to s_L})^\perp\right\}_{s_1,\dots,s_L=1}^{N_1,\dots,N_L}.
\]
\end{definition}

We will first study this code book under the following choices of radius at each layer $r_\ell=n^{(1-\delta)/2^\ell}$, given some small parameter $\delta>0$, and minimum projective distance in the dense arrangements $d:=3\log n$.

\begin{proposition}\label{proposition:primitive_code_size}
The number of elements in a particular $\ell$-layer $d$-angle dense arrangement is bounded by
\[
\log N_\ell\geq\frac{n-\ell+1}{2}\log\left(\frac{2r_\ell}{d}\right).
\]
Then, the total size $N_P$ of a primitive code book for DI over the Bernoulli channel can be bounded by
\[
\log N_P\geq\frac{n\log n}{2}\sum_{\ell=1}^L\frac{1-\delta}{2^\ell}-O(n\log\log n).
\]

\end{proposition}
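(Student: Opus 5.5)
The argument has two parts: a packing bound for a single $d$-angle-dense arrangement, followed by a sum over layers.

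For the first part, fix a layer $\ell$, so we work on the sphere of radius $r_\ell$ in an $m:=n-\ell+1$ dimensional subspace. For unit directions $\vec e_j,\vec e_k$ with angle $\theta$ between them, the quantity in Definition~\ref{def:angle-dense} equals $r_\ell(1-\cos\theta)$. Hence the condition $\|\Pi_{\vec u_k}\vec u_j-\vec u_k\|\ge d$ says exactly that $1-\cos\theta\ge d/r_\ell$. This is an angular-separation condition, equivalent to the chordal condition $\|\vec e_j-\vec e_k\|\ge \sqrt{2d/r_\ell}$. I will therefore take a maximal set of unit vectors with pairwise chordal distance at least $\rho:=\sqrt{2d/r_\ell}$ and scale it by $r_\ell$.

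Maximality implies that the caps of chordal radius $\rho$ around the chosen points cover the unit sphere $S^{m-1}$. The standard cap-measure estimate says that a cap of chordal radius $\rho$ has normalized measure at most about $\rho^{m-1}$, up to factors polynomial in $m$; equivalently, the cap lies inside a ball of radius $\rho$. This gives
\[
N_\ell\gtrsim \rho^{-(m-1)}=\Bigl(\tfrac{r_\ell}{2d}\Bigr)^{(m-1)/2}.
\]
This is the claimed $\frac{m}{2}\log\frac{2r_\ell}{d}$ up to the constant inside the logarithm and one dimension. Those discrepancies cost only $O(n)$ in $\log N_\ell$, which the $O(n\log\log n)$ slack absorbs. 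To get the exact stated constant, I would argue directly with the covering: the chosen caps must cover the sphere, and the surface measure ratio of a cap to the whole sphere is bounded by $(\sin\theta)^{m-1}$ with $\sin^2\theta\le 2(1-\cos\theta)$.

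For the second part, sum over the layers. Substituting $r_\ell=n^{(1-\delta)/2^\ell}$ and $d=3\log n$ gives
\[
\log N_\ell\ge \frac{n-\ell+1}{2}\Bigl(\frac{1-\delta}{2^\ell}\log n-\log\log n-O(1)\Bigr).
\]
Since $L$ is a constant, replacing $n-\ell+1$ by $n$ costs $O(\log n)$ per layer, and the $\log\log n$ terms contribute $O(n\log\log n)$ in total. Summing $\log N_P=\sum_{\ell=1}^L\log N_\ell$ then yields the bound.

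The main obstacle is the packing estimate itself: converting the projective distance into an angular separation, and bounding the cap measure with enough precision that only $O(n)$ losses appear. One more point needs a check: $r_\ell\gg d$ for every fixed $\ell\le L$, because $n^{(1-\delta)/2^L}\gg\log n$. This guarantees the logarithm is positive and the caps are small, so the estimate is meaningful.
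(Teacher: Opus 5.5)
Your argument is correct in substance and follows the same route as the paper. Both proofs convert $d$-angle-density into the angular condition $r_\ell(1-\cos\theta)\ge d$, apply an existence bound for spherical codes on the $(n-\ell+1)$-dimensional sphere, and sum over the $L$ layers with $r_\ell=n^{(1-\delta)/2^\ell}$ and $d=3\log n$. The only difference is that the paper cites $N\ge(\sin\frac{\theta}{2})^{-D[1+o(1)]}$. You instead derive a Gilbert--Varshamov-type bound from a maximal packing and its covering property, which makes that step self-contained.

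One claim in your proposal does not hold: the covering argument cannot give the exact constant $2r_\ell/d$. With cap measure at most $(\sin\theta)^{m-1}$ and $\sin^2\theta\le 2(1-\cos\theta)=2d/r_\ell$, you again get only $N_\ell\ge(r_\ell/2d)^{(m-1)/2}$. That is a factor $4$ short inside the logarithm, just as in your first estimate.

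More generally, any covering argument yields about $(\sin\theta)^{-m}$. This is roughly $2^{m}$ times smaller than $(2r_\ell/d)^{m/2}=(\sin\frac{\theta}{2})^{-m}$. The latter has the base of the sphere-packing \emph{upper} bound (disjoint caps of angular radius $\theta/2$). The paper reaches its first display only by discarding the $[1+o(1)]$ in the exponent of its cited estimate; with that factor kept, the estimate is asymptotically usable only because $\theta\to0$ here.

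So the first display should be read as holding up to an $O(n)$ correction in $\log N_\ell$, which is precisely what you prove. That loss is absorbed into the $O(n\log\log n)$ term of the second display, so your conclusion for $N_P$ stands. Drop the ``exact constant'' sentence rather than rely on it.
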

\begin{proof}
It is a standard consequence of spherical packing bounds (see, e.g., \cite{packing_book}) that the number $N$ of points on the surface of a $D$-dimensional hypersphere at a minimum angular separation $\theta$ can be lower bounded by:
\begin{equation}
    N\geq\left(\sin\frac\theta2\right)^{-D[1+o(1)]}.
\end{equation}
Notice that a $d$-angle dense arrangement at a layer $\ell$, i.e. an arrangement of points on the surface of the ball $\cB(r_\ell,n-\ell+1)$ such that the projective property in Definition~\ref{def:angle-dense} holds, must have a minimum angular separation between their elements satisfying $r_\ell(1-\cos\theta_\ell) \geq d$, which implies 
\begin{equation}\label{eq:angles}
    \sin\frac{\theta_\ell}{2}\geq\sqrt{\frac{d}{2 r_\ell}}.
\end{equation}
Then, putting the last two equations together and noticing that at layer $\ell$ the dimension $D=n-\ell+1$ we observe
\begin{equation}
\log N_\ell\geq\frac{n-\ell+1}{2}\log\left(\frac{2r_\ell}{d}\right).
\end{equation}

Introduce now the values $r_\ell=n^{(1-\delta)/2^\ell}$ and $d=3\log n$ that we have previously chosen for the primitive code book. Then, since $N_P=\prod_{\ell=1}^LN_\ell$, one gets
\begin{equation*}
\log N_P\geq\frac{n\log n}{2}\sum_{\ell=1}^L\frac{1-\delta}{2^\ell}-O\left(n\log\log n\right).\qedhere
\end{equation*}
\end{proof}

The complex geometrical structure of the code book above, specifically the construction of each layer in the particular orthogonal subspaces we have chosen, force the projections of elements in inner layers to concentrate around the projections of their corresponding centres, see Figure~\ref{fig:subspaces} for visual intuition. This property is proved in \cite[Lemma~11]{CDBW:Gauss} and sketched in the proposition below:

\begin{proposition}\label{proposition:concentration}
Let $\vec o_{s^L}\neq\vec o_{\tilde s^L}$ be two different last-layer elements (primitive code words) which differ at some layer $\ell$, that is, $s_\ell\in s^L$ is different from $\tilde s_\ell\in \tilde s^L$. Then,
\[
\|\Pi_{s_{\ell-1}\to s_\ell}\vec o_{\tilde s^L}-\Pi_{s_{\ell-1}\to s_\ell}\vec o_{s^L}\|\geq d-\Delta\geq d-\sqrt{2Ld}.
\]    
\end{proposition}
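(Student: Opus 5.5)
Let $\ell$ be the first layer at which the two strings differ, so $s^{\ell-1}=\tilde s^{\ell-1}$ and $s_\ell\neq\tilde s_\ell$. Put $\vec e:=\vec v_{s_{\ell-1}\to s_\ell}/r_\ell$, and measure all projections onto $\mathrm{span}(\vec e)$ relative to the common centre $\vec o_{s^{\ell-1}}$, i.e.\ work with the scalar coordinate $\langle\,\cdot-\vec o_{s^{\ell-1}},\vec e\rangle$. I will use the decompositions $\vec o_{s^L}=\vec o_{s^\ell}+\sum_{i>\ell}\vec v_{s_{i-1}\to s_i}$ and $\vec o_{\tilde s^L}=\vec o_{\tilde s^\ell}+\sum_{i>\ell}\vec v_{\tilde s_{i-1}\to \tilde s_i}$. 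Both $\vec o_{s^\ell}$ and $\vec o_{\tilde s^\ell}$ lie in $\vec o_{s^{\ell-1}}+\mathrm{span}(\vec v_{s_0\to s_1},\dots,\vec v_{s_{\ell-2}\to s_{\ell-1}})^\perp$, so the earlier-layer directions play no role.

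\textbf{Step 1 (the codeword $\vec o_{s^L}$).} By construction every $\vec v_{s_{i-1}\to s_i}$ with $i>\ell$ is orthogonal to $\vec v_{s_{\ell-1}\to s_\ell}$. Hence the coordinate of $\vec o_{s^L}$ is exactly $r_\ell$, with no error.

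\textbf{Step 2 (the codeword $\vec o_{\tilde s^L}$).} The coordinate of $\vec o_{\tilde s^\ell}$ is $\langle\vec v_{s_{\ell-1}\to\tilde s_\ell},\vec e\rangle$. By $d$-angle-density (Definition~\ref{def:angle-dense}) this is at most $r_\ell-d$. It remains to bound the tail
\[
T:=\sum_{i>\ell}\langle\vec v_{\tilde s_{i-1}\to\tilde s_i},\vec e\rangle .
\]
This is where the main obstacle lies, since the vectors $\vec v_{\tilde s_{i-1}\to\tilde s_i}$ are orthogonal to $\vec v_{s_{\ell-1}\to\tilde s_\ell}$, not to $\vec e$. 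They are, however, mutually orthogonal, with norms $r_i$.

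\textbf{Step 3 (bounding $T$).} Write $\vec e=\cos\phi\,\vec f+\sin\phi\,\vec g$, where $\vec f=\vec v_{s_{\ell-1}\to\tilde s_\ell}/r_\ell$, $\vec g\perp\vec f$, and $\cos\phi\le 1-d/r_\ell$. Since every later vector is orthogonal to $\vec f$, only $\vec g$ contributes. Because the later vectors are mutually orthogonal, Cauchy--Schwarz (or Bessel) gives
\[
|T|\le \sin\phi\,\Bigl(\sum_{i>\ell}r_i^2\Bigr)^{1/2}.
\]
Two further bounds finish this step. First, $\sin^2\phi\le 1-\cos^2\phi\le 2(1-\cos\phi)$; in the worst case $1-\cos\phi$ is of order $d/r_\ell$, and then $\sin\phi\lesssim\sqrt{2d/r_\ell}$. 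Second, $r_i\le r_{\ell+1}=\sqrt{r_\ell}$ up to the $n^{\delta}$ factors in the exponent. Together these give $|T|\le\sqrt{2d/r_\ell}\,\sqrt{L\,r_\ell}=\sqrt{2Ld}=:\Delta$. The key point is that the choice $r_{\ell+1}^2\le r_\ell$ is exactly what cancels the $r_\ell$.

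There is one case to handle. If the angle is large, so that $1-\cos\phi$ is much bigger than $d/r_\ell$, the separation itself is larger. I would therefore run the argument with $\epsilon:=1-\cos\phi\ge d/r_\ell$ instead of $d/r_\ell$. The gap becomes $r_\ell\epsilon-\sqrt{2\epsilon}\sqrt{L r_\ell}$. This expression is increasing in $\epsilon$ once $\epsilon\ge d/r_\ell$, because $r_\ell\sqrt\epsilon\ge\sqrt{L r_\ell/2}$ there since $d\ge L$. So the worst case is $\epsilon=d/r_\ell$, which yields $d-\sqrt{2Ld}$.

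\textbf{Step 4 (conclusion).} Subtracting the two coordinates gives a difference of at least $r_\ell-(r_\ell-d)-|T|\ge d-\Delta\ge d-\sqrt{2Ld}$, which is the stated bound. The delicate part is Step 3, namely tracking the tilt of the later-layer subspaces relative to $\vec e$.
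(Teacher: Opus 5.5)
Your proposal is correct and follows the paper's own argument. The $d$-angle-density gives a separation of at least $d$ at the first differing layer. By mutual orthogonality of the later-layer vectors, the tail shifts the projection by at most $\sin\theta_\ell\bigl(\sum_{j>\ell}r_j^2\bigr)^{1/2}$, which is $\le\sqrt{2Ld}$ at the minimal angle because $r_{\ell+1}=\sqrt{r_\ell}$ (exactly, not merely up to $n^\delta$ factors).

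Your monotonicity argument in $\epsilon=1-\cos\phi$ supplies the justification the paper omits when it asserts that the smallest admissible angle is the worst case. One fix: for large angles, Step 4 should bound the difference by $r_\ell\epsilon-|T|$ rather than $d-|T|$, since $|T|$ alone can then exceed $\sqrt{2Ld}$.
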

\begin{proof}
One just needs to notice that all the vectors $\vec v_{s_{\ell-1}\to s_\ell}$ corresponding to the same word, say $\vec o_{\tilde s^L}$ are mutually orthogonal. Hence, by Pythagoras, the distance between the code word $\vec o_{\tilde s^L}$ and its arrangement centre $\vec o_{\tilde s^\ell}$ at layer $\ell$ satisfies
\begin{equation}
\|\vec o_{\tilde s^L}-\vec o_{\tilde s^\ell}\|^2
\leq\sum_{j=\ell}^L \|\vec v_{s_{j-1}\to s_j}\|^2
=\sum_{j=\ell}^L r_j^2.
\end{equation}
Then, its projection onto the vector $\vec v_{s_{\ell-1}\to s_\ell}$ can be separated from the projection $\Pi_{s_{\ell-1}\to s_\ell}\vec o_{\tilde s^\ell}$ by at most a distance (cf.~Figure~\ref{fig:subspaces})
\begin{equation}\label{eq:delta_def}
    \Delta:=\sin\theta_\ell\sqrt{\sum_{j=\ell+1}^L r_j^2}.
\end{equation}

In the worst case scenario, the two code word projections we are comparing differ already at $\ell=1$, and angles are the smallest possible, that is, $\cos\theta_\ell=1-d/r_\ell$, and thus $\sin\theta_\ell=\sqrt{\frac{2d}{r_\ell}-\frac{d^2}{r_\ell^2}}$. Then,
\begin{equation}
  \Delta\leq\sqrt{L}r_2\sin\theta_1=\sqrt{L\left(2d-\frac{d^2}{r_1}\right)}\leq\sqrt{2Ld}.
\end{equation}
Where in the second line we used that $r_2=\sqrt{r_1}$. Then, the proposition claim follows trivially.
\end{proof}

With our choice of the parameter $d=3\log n$ and noticing again that $\Pi_{s_{\ell-1}\to s_\ell}\vec o_{s^L}=\vec o_{s^\ell}$ one immediately observes that, for sufficiently large $n$, the following holds:
\begin{equation}\label{eq:concentration_property}
\|\Pi_{s_{\ell-1}\to s_\ell}\vec o_{\tilde s^L}-\vec o_{s^\ell}\|\geq 3\log n -O(\sqrt{\log n})\geq 2\log n.
\end{equation}

We have now all the necessary tools to show that the primitive code can be reliable. For more details on the primitive code construction we refer the reader to \cite[Sections II \& III]{CDBW:Gauss}, which include further geometric intuition and a more detailed step-by-step construction.

\subsection{Decoding strategy and error analysis}\label{ssec:error_analysis}
While the primitive code word arrangement is equivalent to that in \cite{CDBW:Gauss}, we must bound the errors differently than as in the Gaussian case due to the different nature of the noise. 
In this section, we will describe the decoding strategy and show that the elements of the primitive code book, described in the previous section, can be identified reliably, with vanishing missed and false identification error probabilities. 

Let us use a sequential decoding strategy. For each possible $i\in[N]$ encoded into the primitive code book word $\vec o_{s^L}$, the following decoding effect is defined at each layer $\ell$:
\begin{equation}\label{eq:layer-decoder}
\cD_{i}^{(\ell)}=\{\vec{y}\in\RR^n:\|\Pi_{s_{\ell-1}\to s_{\ell}}\vec{y}-\vec{o}_{s^\ell}\|\leq\log n\},
\end{equation}
where $\Pi_{s_{\ell-1}\to s_{\ell}}$ denotes the orthogonal projection onto the $\text{span}(\vec v_{s_{\ell-1}\to s_{\ell}})$. 

To identify the message $i$, the test above must produce a positive answer for that decoder in each layer. In other words, the overall identification test is defined as $\cD_i=\cap_{\ell=1}^L\cD_{i}^{(\ell)}$. On the other hand, as soon as a test at a single layer fails, the message is discarded.

If the same primitive code word $\vec o_{s^L}$ (corresponding to message $i$) is sent and tested, the probability that the identification test at layer $\ell$ in Equation~\eqref{eq:layer-decoder} produces a positive outcome is given by:
\begin{align}
\Pr&(\cD_i^{(\ell)}|\vec o_{s^L})=\Pr\left\{\|\Pi_{s_{\ell-1}\to s_{\ell}}\vec{Y}-\vec{o}_{s^\ell}\|\leq\log n|\vec o_{s^L}\right\}\nonumber\\
&\stackrel{(\alpha)}{=}\Pr\left\{\|\Pi_{s_{\ell-1}\to s_{\ell}}\vec o_{s^L}+\Pi_{s_{\ell-1}\to s_{\ell}}\vec{Z}-\vec{o}_{s^\ell}\|\leq \log n\right\}\nonumber\\
&\stackrel{(\beta)}{=}\Pr\left\{\|\Pi_{s_{\ell-1}\to s_{\ell}}\vec{Z}\|\leq\log n\right\}\nonumber\\
&\stackrel{(\gamma)}{\geq}1-2\exp\left[-2(\log n)^2\right]=:1-\lambda,\label{eq:primitive_layer_error1}
\end{align}
where in $(\alpha)$ we have used the additive noise expression of the Bernoulli channel, described in Equation~\eqref{eq:BernoulliChannel}, and the linearity of the projections; for $(\beta)$ one just needs to observe that $\Pi_{s_{\ell-1}\to s_{\ell}}\vec o_{s^L}=\vec{o}_{s^\ell}$, and $(\gamma)$ follows from Proposition~\ref{proposition:projective_property}. In the last equality we have defined
\(
    \lambda:=2\exp\left[-2(\log n)^2\right].
\)

The probability of correct identification at each layer, bounded in Equation~\eqref{eq:primitive_layer_error1}, is the same across the $L$ layers so, by the union bound, the total missed identification probability is bounded for all $i\in[N]$ by:
\begin{equation}\label{eq:main_Pe1}
\begin{split}
\!\!P_{e,1}(i):=\Pr\left(\bigcup_{\ell=1}^{L}\cD_i^{(\ell)^c}| i\right)&\leq\sum_{\ell=1}^{L}\Pr\left(\cD_i^{(\ell)^c}|\vec o_{s^L}\right)\\
&\leq2L\exp\left[-2(\log n)^2\right]\\
&=L\lambda:=\lambda_1,
\end{split}
\end{equation}
where the superscript $c$ indicates the complementary set. Notice that, for any fixed $L$ (independent of $n$) the missed identification probability above vanishes as $n\to \infty$. 

If the message $j$, encoded into the primitive code word $\vec o_{\tilde s^L}$, is sent, and a different message $i\neq j\in[N]$, encoded into $\vec o_{s^L}$, is tested; then, the probability that at a layer $\ell$ in which $s_l\in s^L$ is different from $\tilde s_\ell\in\tilde s^L$ the outcome of the test in Equation~\eqref{eq:layer-decoder} is (wrongly) positive can be bounded as follows:
\begin{align}
\Pr&(\cD_i^{(\ell)}|\vec o_{\tilde s^L})=\Pr\left\{\|\Pi_{s_{\ell-1}\to s_{\ell}}\vec{Y}-\vec{o}_{s^\ell}\|\leq\log n|\vec{o}_{\tilde{s}^{L}}\right\}\nonumber\\
&\!\stackrel{(\alpha)}{=}\Pr\left\{\|\Pi_{s_{\ell-1}\to s_{\ell}}\vec{o}_{\tilde{s}^{L}}+\Pi_{s_{\ell-1}\to s_{\ell}}\vec{Z}-\vec{o}_{s^\ell}\|\leq \log n\right\}\nonumber\\
&\!\stackrel{(\beta)}{\leq}\Pr\left\{\|\Pi_{s_{\ell-1}\to s_{\ell}}\vec{o}_{\tilde{s}^{L}}-\vec{o}_{s^\ell}\|-\|\Pi_{s_{\ell-1}\to s_{\ell}}\vec{Z}\|\leq\log n\right\}\nonumber\\
&\!\stackrel{(\gamma)}{\leq}\Pr\left\{\|\Pi_{s_{\ell-1}\to s_{\ell}}\vec{Z}\|\geq\log n\right\}\nonumber\\
&\!\stackrel{(\eta)}{\leq}2\exp\left[-2(\log n)^2\right]=\lambda,\label{eq:primitive_layer_error2}
\end{align}
where $(\alpha)$ follows from the additive expression of the channel effect $Y^n=x^n+Z^n$, in $(\beta)$ the reverse triangle inequality has been used, $(\gamma)$ follows from Equation~\eqref{eq:concentration_property}, and $(\eta)$ from Proposition~\ref{proposition:projective_property}.

Since the error of the second type can be bounded at all other layers by $1$, we can conclude that the overall false identification error is given, for any two distinct messages $i\neq j\in[N]$, by:
\begin{equation}\label{eq:main_Pe2}
P_{e,2}(i|j)=\Pr\left(\cD_i|j\right)\leq\Pr\left(\cD_i^{(\ell)}| j\right)\leq\lambda:=\lambda_2.
\end{equation}

We have shown that the primitive code described in Section~\ref{ssec:primitive} can be reliably identified by the sequential decoding strategy proposed in this section. Indeed, using the decoding effects in Equation~\eqref{eq:layer-decoder}, both the missed identification and false identification error probabilities can be bounded by objects that vanish as $n\to\infty$ ($\lambda_1=L\lambda$ and $\lambda_2=\lambda$, respectively).

\subsection{Capacity-achieving code for the Bernoulli channel}\label{ssec:opt_code}
The primitive code book described in the previous sections is a set of points arranged on a sphere of radius $R$, with
\begin{equation}
R^2=\sum_{\ell=1}^Lr_\ell^2=\sum_{\ell=1}^L n^\frac{1-\delta}{2^{\ell-1}}=n^{1-\delta}+o(\sqrt{n}),
\end{equation}
but a DI code book for the Bernoulli channel has to be contained inside the available input space (an $n$ dimensional unit cube). A priori, this seems like a serious problem. How can any point on the surface of a sphere with large radius (indeed, $R$ grows with the block length, which we want to analyse in the limit $n\to\infty$) be inside a cube of small and constant side (in fact, the side is 1 for the general Bernoulli channel and $b-a<1$ in the restricted $B|_{[a,b]}$ case)? 

Well, one needs to notice that even if the constant side of a square is small, the vertices $V_k\in\{0,1\}^n$ with $k\in[1,2^n]$, are far away from the cube centre $\vec o_{s_0}=(\frac12,\dots,\frac12)$. Indeed, the distance $D$ from the cube centre to any vertex $V_k$ satisfies 
\begin{equation}
    D^2=\left\|V_k-\left(\frac12,\dots,\frac12\right)\right\|^2=\sum_{t=1}^n\frac14=\frac n4.
\end{equation}

Notice that, for sufficiently large $n$, we have $D>R$. Therefore, near the vertices, the surface of the $n$-dimensional hypersphere of radius $R$ is inside the unit $n$-dimensional hypercube (see Figure~\ref{fig:urchin} for further insight). In this section, we will calculate how many primitive code words are inside the cube and can therefore form a valid DI code for the Bernoulli channel. This will allow us to calculate a lower bound on the rate.

\begin{figure}
    \centering
    \includegraphics[width=0.75\linewidth]{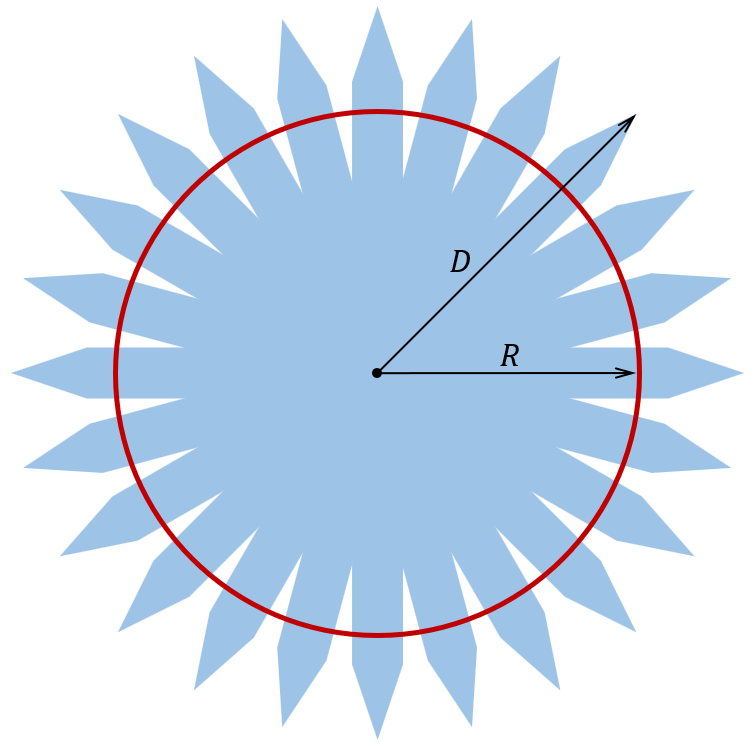}
    \caption{In high dimensions, a cube is better pictured not as a solid box but as a star or sea urchin: a relatively round core (the Gaussian width of a cube is roughly the same as that of its circumscribed ball \cite{Vershynin:high_dimensions_book}) with exponentially many ``spines'' extending far away from the centre towards the $2^n$ vertices. Our code lives on the surface of the sphere of radius $R$ (red line above), which overlaps (is inside) the cube (the blue sea urchin) in most of its volume.}
    \label{fig:urchin}
\end{figure}

Let us start by understanding how a ``typical'' point on the surface of the sphere looks like. A standard device in high-dimensional geometry is to generate a uniform point on the sphere by first sampling a Gaussian vector and then normalising it. Let $T=\frac{R}{\|G\|}(G_1,\dots,G_n)$, where the $G_i\sim\cN(0,1)$ are identical and independent distributed normal Gaussians. Then, $X$ is a uniformly distributed random point on the surface of the sphere $\cB(R,n)$, denoted by $S^{n-1}(R)$. Each marginal distribution $T_i=\frac{R}{\|G\|}G_i$ has sub-Gaussian tails satisfying
\begin{equation}\label{eq:sub-gaussian_tail}
\Pr\left\{|T_i|> t\right\}\leq 2\exp\left(-\frac{cnt^2}{R^2}\right),
\end{equation}
for some absolute constant $c>0$ and all $t>0$. Since, for sufficiently large $n$, we have $R^2=n^{1-\delta}+o(\sqrt{n})\leq2n^{1-\delta}$, and choosing $t=\frac12$, one gets
\begin{equation}\label{eq:out_cube_1Dcondition}
\!\!\Pr\left\{|T_i|> \frac12\right\}\leq 2\exp\left(-\frac{cn}{8n^{1-\delta}}\right)= 2\exp\left(-\frac{cn^\delta}{8}\right)\!.
\end{equation}

Let the set of points on the surface $S^{n-1}(R)$ which are outside a concentric cube of side $1$ be denoted by
\begin{equation}
\mathbf{T}_{\text{out}}:=\left\{t\in S^{n-1}(R):\exists i, |t_i|>\frac12\right\}.
\end{equation}
Then, by the union bound,
\begin{equation}\label{eq:out_cube_Multi_D_condition}
\!\!\Pr\left\{T\in \mathbf{T}_{\text{out}}\right\}\!\leq\!\sum_{i=1}^n\Pr\!\left\{\!|T_i|> \frac12\!\right\}\!\leq 2n\exp\left(-\frac{cn^\delta}{8}\right)\!.
\end{equation}
In other words, the probability that a uniform random point on the surface of the sphere falls outside the cube is very small. In fact, it vanishes with $n$.

Let $\sigma(V):=\text{Area}[V]/\text{Area}[S^{n-1}(R)]$ be the normalised surface measure on $S^{n-1}(R)$. Then, the set of points on the surface of the ball which are also inside the cube, denoted by $\mathbf{T}_{\text{in}}=\mathbf{T}_{\text{out}}^c$, satisfies:
\begin{equation}
\sigma(\mathbf{T}_\text{in})=1-\sigma(\mathbf{T}_\text{out})\geq 1-2n\exp\left(-\frac{cn^\delta}{8}\right).
\end{equation}

We now introduce the primitive code book described in Section \ref{ssec:primitive}, denoted here by $\cC_P\subset S^{n-1}(R)$, with $N_P=|\cC_P|$. Notice that any random rotation of the space does not change the argument above. Indeed, if $U$ is a Haar-uniform random rotation in $SO(n)$, $\Pr\{UT\in \mathbf{T}_{\text{out}}\}\leq 2n\exp\left(-cn^\delta/8\right)$, by the same arguments as before. Hence, if $N(U):=|U\cC_P\in\mathbf{T}_\text{in}|$ is the number of words in a primitive code book rotated by $U$ which are inside the cube, by the linearity of the expectation, 
\begin{equation}
    \EE_U[N(U)]=N_P\cdot\sigma(\mathbf{T}_\text{in})\geq N_P\left(1-2ne^{-cn^\delta/8}\right).
\end{equation}
This implies that there must exist at least one rotation $U_0\in SO(n)$ such that $N(U_0)\geq N_P\left(1-2ne^{-cn^\delta/8}\right)$. 

Alternatively, one can show that, with overwhelming probability, a random rotation $U$ puts all but an exponentially small fraction of the rotated primitive code book inside the cube. Indeed, if we let $M(U):=N_P-N(U)$ be the primitive words outside the cube (and that, therefore, are unusable), we have that $\EE_U[M(U)]=N_P\cdot\sigma(\mathbf{T}_\text{out})\leq N_P2ne^{-cn^\delta/8}$. Invoking Markov's inequality,
\begin{equation}
\Pr\left\{M(U)\geq \epsilon N_P\right\}\leq\frac2\epsilon ne^{-cn^\delta/8}, 
\end{equation}
which holds for any $\epsilon>0$, and choosing $\epsilon:=2ne^{cn^\delta/16}$, one finally gets
\begin{equation}
\!\!\!\Pr\left\{M(U)\!\geq \!\epsilon N_P\!\right\}=\Pr\left\{N(U)\!\leq\! N_P(1-\epsilon)\right\}\leq e^{\frac{cn^\delta}{16}}\!.
\end{equation}

In Section \ref{ssec:error_analysis} we have shown that the proposed primitive code words can be reliably identified. In this section, we have observed that all but an exponentially small number of the primitive words are inside the unit cube that defines the input space of the Bernoulli channel and can therefore be used as code words for reliable deterministic identification:
\begin{theorem}\label{thm:Bernoulli}
Let $\lambda=2\exp\left[-2(\log n)^2\right]$, and fix $L\in \NN$ a large natural number independent of $n$. Then, there exists an $L$-layer reliable $(n,N,L\lambda,\lambda)$ code for DI over the Bernoulli channel that achieves a linearithmic rate 
\(
\dot{R}=\frac{1}{2}.
\)
\end{theorem}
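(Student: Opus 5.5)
The proof assembles the three ingredients already established: the size bound of Proposition~\ref{proposition:primitive_code_size}, the error analysis of Section~\ref{ssec:error_analysis}, and the rotation/expurgation argument above. The rate comes from a double limit, first $n\to\infty$ with $L,\delta$ fixed, then $L\to\infty$, $\delta\to0$.

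\emph{Step 1: build and expurgate the code.} Build the primitive code book $\cC_P$ centred at $\vec o_{s_0}=(\frac12,\dots,\frac12)$ with $r_\ell=n^{(1-\delta)/2^\ell}$ and $d=3\log n$. Apply a rotation $U_0\in SO(n)$ about $\vec o_{s_0}$ with $N(U_0)\geq N_P(1-2ne^{-cn^\delta/8})$, which exists by the averaging argument. Keep only the words of $U_0\cC_P$ lying in $\mathbf{T}_{\text{in}}$, i.e.\ inside $[0,1]^n$; these are valid Bernoulli inputs.

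\emph{Step 2: reliability.} The decoders are $\cD_i=\cap_\ell \cD_i^{(\ell)}$, with the projections taken along the rotated vectors $U_0\vec v_{s_{\ell-1}\to s_\ell}$. Since $U_0$ is an isometry fixing the centre, it preserves angle-density, the orthogonality of the layers, and hence Proposition~\ref{proposition:concentration} and the bound \eqref{eq:concentration_property}.

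Proposition~\ref{proposition:projective_property} applies to any fixed direction and any input in $[0,1]^n$. Therefore the bounds \eqref{eq:main_Pe1} and \eqref{eq:main_Pe2} hold verbatim for every retained word, giving $\lambda_1=L\lambda$ and $\lambda_2=\lambda$. Removing codewords cannot increase either error, because each bound is per pair of messages. Intersecting the $\cD_i$ with $\{0,1\}^n$ changes nothing, since outputs always lie on the vertices.

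\emph{Step 3: rate.} For $n$ large the retained fraction is at least $\frac12$, so $\log N\geq \log N_P-1$. By Proposition~\ref{proposition:primitive_code_size},
\[
\frac{\log N}{n\log n}\geq \frac{1-\delta}{2}\left(1-2^{-L}\right)-O\!\left(\frac{\log\log n}{\log n}\right).
\]
Letting $n\to\infty$, then $L\to\infty$ and $\delta\to0$, the achievable rates approach $\frac12$. A diagonal sequence $L(n)\to\infty$ slowly and $\delta(n)\to0$ slowly gives a single sequence of codes with $\dot R(n)\to\frac12$, provided $\lambda_1=L(n)\lambda\to0$; this is immediate since $\lambda$ decays superpolynomially. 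Combined with the converse $\dot C_{\text{DI}}(B)\leq\frac12$, the achievable rate is exactly $\frac12$.

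\emph{Main obstacle.} The delicate point is the interplay between $\delta$ and the cube-containment estimate. It requires $R^2\leq 2n^{1-\delta}$ with $\delta>0$ strictly, so that $ne^{-cn^\delta/8}\to0$. If $\delta$ is sent to zero with $n$, this needs $n^{\delta(n)}\gg\log n$, for instance $\delta(n)=(\log n)^{-1/2}$.

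The lower-order terms must also stay negligible for growing $L$. The bound $\Delta\leq\sqrt{2Ld}$ must remain $o(\log n)$, which holds if $L=o(\log n)$. The $O(n\log\log n)$ loss must be summed over $L$ layers, and it stays $o(n\log n)$ for slowly growing $L$. One must also check that every $r_\ell\gg d$, i.e.\ $n^{(1-\delta)/2^L}\gg\log n$, which again forces $L\lesssim\log\log n$. I would choose $L(n)=\lfloor\frac12\log\log n\rfloor$ and verify these three constraints. Alternatively, one can keep $L$ fixed as the statement allows and phrase rate $\frac12$ as the supremum over $L$ and $\delta$.
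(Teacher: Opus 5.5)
Your proposal is correct and follows the paper's proof: build the primitive code book, keep the words of a good rotation $U_0\mathcal{C}_P$ that lie inside $[0,1]^n$ (with $N(U_0)\geq N_P(1-2ne^{-cn^\delta/8})$), reuse the layer-wise error bounds, bound the size via Proposition~\ref{proposition:primitive_code_size}, and finally send $L\to\infty$, $\delta\to0$. What you add fills in details the paper leaves implicit: projecting along the rotated directions $U_0\vec v_{s_{\ell-1}\to s_\ell}$, and a diagonal choice $L(n)\sim\frac12\log\log n$, $\delta(n)=(\log n)^{-1/2}$ that yields a single code sequence with $\dot R(n)\to\frac12$, whereas the paper only shows $\dot R(n)\geq\frac12-\epsilon$ for each fixed $\epsilon$.
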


\begin{proof}
Encode the $N:=N(U_0)$ messages into the $L$-layer points $\{\vec o_{s^L}:\vec o_{s^L}\in\mathbf{T}_\text{in}\}$, and to each message $i\in [N]$, associate the decoder $\cD_i=\cap_{\ell=1}^L\cD_i^{(\ell)}$, with $\cD_i^{(\ell)}$ given in Equation~\eqref{eq:layer-decoder}. The code is indeed reliable with the missed identification error probability behaving as calculated in the expressions \eqref{eq:primitive_layer_error1} and \eqref{eq:main_Pe1}, and false identification error probability behaving as in the expressions \eqref{eq:primitive_layer_error2} and \eqref{eq:main_Pe2}. It is only left to calculate the rate. 

The primitive code book size $N_P$ was bounded in Proposition~\ref{proposition:primitive_code_size}, together with $N(U_0)\geq N_P\left(1-2ne^{-cn^\delta/8}\right)$, a fact observed in this section, this produces
\begin{align}
\log N&\geq\log N_P+\log\left(1-2ne^{-cn^\delta/8}\right)\nonumber\\
&\geq \frac{n\log n}{2}\sum_{\ell=1}^L\frac{1-\delta}{2^\ell}-O(n\log\log n)\label{eq:size_step},
\end{align}
since the last term in the first line is absorbed by the $O(n\log\log n)$. The linearithmic rate is therefore given by
\begin{equation}\label{eq:linearithmic_rate}
\dot R(n):=\frac{\log N}{n\log n}\geq
\frac{1}{2}\sum_{\ell=1}^L\frac{1-\delta}{2^\ell}-O\left(\frac{\log\log n}{\log n}\right).
\end{equation}
Notice now that, choosing a sufficiently large $L$ and small $\delta>0$, one can make the sum $\sum_{\ell=1}^L{2^{-\ell}}$ arbitrarily close to $1$. Also, for large enough $n$ the last term is arbitrarily small. Therefore, given any $\epsilon>0$ one can choose a suitable $L$ and find a threshold $n_0$ such that, for all $n\ge n_0$, the linearithmic rate fulfils $\dot R(n)\geq\frac12-\epsilon$. Hence, $\dot R=\frac12$ is achievable, and the proof is completed.
\end{proof}

\begin{corollary}\label{cor:capacity}
The linearithmic deterministic identification capacity of the Bernoulli channel is $\dot C_\text{DI}(B)=\frac12$.
\end{corollary}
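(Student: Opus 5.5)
The corollary follows by matching the achievability of Theorem~\ref{thm:Bernoulli} with the known converse. For the lower bound, Theorem~\ref{thm:Bernoulli} gives, for every $\epsilon>0$, a suitable choice of $L$ and $\delta$. The resulting sequence of $(n,N,L\lambda,\lambda)$ DI codes has $\lambda=2\exp[-2(\log n)^2]\to0$ and $\dot R(n)\geq\frac12-\epsilon$ for all large $n$. Hence every rate below $\frac12$ is achievable in the sense of \eqref{eq:achievable_rate_intro}.

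To get $\frac12$ itself as an achievable rate, as the $\max$ in \eqref{eq:def_capacity} requires, I would use a diagonal argument. Take a sequence $\epsilon_k\downarrow0$ with thresholds $n_k$ increasing. For $n_k\le n<n_{k+1}$, use the $L_k,\delta_k$ code from the theorem. The errors are at most $L_k\lambda$, which still vanishes provided $n_k$ grows fast enough relative to $L_k$. The rate then tends to $\frac12$, though one has to check that the limit exists (for instance by truncating the codebook) rather than only the $\liminf$. This gives $\dot C_{\mathrm{DI}}(B)\ge\frac12$.

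For the upper bound I would invoke the converse already stated in \eqref{eq:gap_intro}, from \cite{CDBW:DI_classical}. It also follows from \eqref{eq:gap2_intro}: the image $B([0,1])$ is a segment in the $1$-simplex $\cP(\{0,1\})$, so $d_M=1$ and therefore $\dot C_{\mathrm{DI}}(B)\le\frac{d_M}{2}=\frac12$.

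The main obstacle is not the combination of bounds but making sure the code in Theorem~\ref{thm:Bernoulli} is genuinely valid. The rotation $U_0$ must map the code centre to the cube centre $\vec o_{s_0}$, so the "inside the cube" analysis has to be done for points relative to $(\frac12,\dots,\frac12)$. Also, rotating the codebook rotates the decoders along with it, and this preserves Proposition~\ref{proposition:projective_property} because that bound holds for an arbitrary direction $\vec v$. Once these points are checked, the two bounds coincide and the corollary follows.
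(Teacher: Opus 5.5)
Your proposal is correct and follows the paper's own proof: achievability from Theorem~\ref{thm:Bernoulli} combined with the known converse $\dot C_{\mathrm{DI}}(B)\le\frac12$ from \cite{CDBW:DI_classical}. Your additional points are sound refinements of steps the paper leaves implicit: the diagonal-plus-truncation argument that makes $\frac12$ itself achievable (not only every $\frac12-\epsilon$), the $d_M=1$ route to the converse, and the check that rotating the codebook about the cube centre, with the decoders rotated along, preserves the error analysis.
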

\begin{proof}
We observed in Theorem~\ref{thm:Bernoulli} that $\dot R=\frac12$ is achievable, so $\dot{C}_\text{DI}(B)\geq\frac12$. We know from \cite{CDBW:DI_classical}, c.f.~Equation~\eqref{eq:gap_intro}, that $\dot{C}_\text{DI}(B)\leq\frac12$. So, necessarily, $\dot{C}_\text{DI}(B)=\frac12$. 
\end{proof}

\subsection{Generalization to broader families of channels} \label{ssec:restricted_bern}

In the previous section we have closed the capacity gap for the Bernoulli channel, establishing that the linearithmic DI capacity is $\dot{C}_\text{DI}(B)=\frac12$. In this section, we generalize this result first to the restricted Bernoulli channel, and then to the family of channels $Q$ with continuous input and discrete output whose output is the union of a finite number of continuous 
curves and a residual set of (upper) Minkowski dimension at most $1$, and thus has $d_M=1$; as well as to the Gaussian and Poisson cases.

\begin{theorem}\label{thm:Restricted_Bernoulli}
Let $\lambda=2\exp\left[-2(\log n)^2\right]$, and fix $L\in \NN$. Then, there exists an $(n,N,L\lambda,\lambda)$ code for DI over the Bernoulli channel restricted to a continuous interval $[a,b]$ that achieves a linearithmic rate 
\(
\dot{R}=\frac{1}{2}.
\)
Hence, 
\(
\dot{C}_\text{DI}(B|_{[a,b]})=\frac12.
\)
\end{theorem}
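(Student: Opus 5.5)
The plan is to reuse the construction and error analysis of Theorem~\ref{thm:Bernoulli} essentially verbatim, after an affine change of coordinates. The input cube is now $[a,b]^n$, with centre $\vec o_{s_0}=(\frac{a+b}{2},\dots,\frac{a+b}{2})$ and half-side $w:=\frac{b-a}{2}>0$. The first step is to check that the decoding analysis of Section~\ref{ssec:error_analysis} never used the specific input set. Proposition~\ref{proposition:projective_property} holds for every input $x^n\in[0,1]^n$, hence in particular for $x^n\in[a,b]^n$. Proposition~\ref{proposition:concentration} is purely geometric. Consequently, the bounds \eqref{eq:main_Pe1} and \eqref{eq:main_Pe2}, namely $\lambda_1=L\lambda$ and $\lambda_2=\lambda$, hold for any primitive code book built around $\vec o_{s_0}$ whose words lie in the input space. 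The same radii $r_\ell=n^{(1-\delta)/2^\ell}$, the same $d=3\log n$ and the same decoders \eqref{eq:layer-decoder} are kept, so Proposition~\ref{proposition:primitive_code_size} gives the same lower bound on $N_P$.

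The second step is to redo the expurgation of Section~\ref{ssec:opt_code} with the threshold $\frac12$ replaced by $w$. The sub-Gaussian tail \eqref{eq:sub-gaussian_tail} is applied with $t=w$, using $R^2\le 2n^{1-\delta}$. This gives
\[
\Pr\{|T_i|>w\}\le 2\exp\left(-\frac{c w^2 n^{\delta}}{2}\right),
\]
and by the union bound $\sigma(\mathbf{T}_\text{out})\le 2n\exp(-cw^2n^\delta/2)$. Since $w$ is a constant independent of $n$, this still vanishes; only the constant in the exponent changes. The Haar-random rotation averaging argument then yields a $U_0\in SO(n)$ with $N(U_0)\ge N_P(1-2ne^{-cw^2n^\delta/2})$. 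The rotation is taken about $\vec o_{s_0}$, so the layered structure and the decoders are rotated along with the words and the error bounds are unchanged. One should also note that $R<\sqrt n\, w$ for large $n$, although the tail bound already implies this.

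The third step is the rate computation, which is identical to \eqref{eq:size_step}--\eqref{eq:linearithmic_rate}. The correction term $\log(1-2ne^{-cw^2n^\delta/2})$ is $o(1)$ and is absorbed into $O(n\log\log n)$. Letting $L\to\infty$ and $\delta\to0$ gives the achievable rate $\frac12$. For the converse, $B|_{[a,b]}$ has a one-dimensional output image, i.e.\ a segment, so $d_M=1$ and \eqref{eq:gap2_intro} gives $\dot C_\text{DI}(B|_{[a,b]})\le\frac12$. Alternatively, any DI code for $B|_{[a,b]}$ is also one for $B$, so Corollary~\ref{cor:capacity} gives the same bound.

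The main obstacle is making sure that nothing in the earlier analysis implicitly used the full interval $[0,1]$. The one place to check is the sub-Gaussian step, whose constant now degrades with $w^2$. This is harmless as long as $b-a$ is fixed, but it would matter if one wanted uniformity in $b-a\to0$.
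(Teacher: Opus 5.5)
Your proposal is correct and follows the paper's proof essentially step for step. You recentre the primitive code at the centre of $[a,b]^n$, keep the same radii, $d$, decoders and error analysis, and redo the expurgation with threshold $(b-a)/2$; your exponent $c w^2 n^\delta/2$ equals the paper's $c(b-a)^2 n^\delta/8$. Your monotonicity remark for the converse (any DI code for $B|_{[a,b]}$ is also a DI code for $B$) is a valid shortcut to the upper bound that the paper simply cites from \cite{CDBW:DI_classical}.
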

\begin{proof}
Start by noticing that the available input space of the Bernoulli channel $B|_{[a,b]}$ restricted to a continuous interval $[a,b]$ is the $n$-dimensional cube of side $s:=b-a$. Let $Q$ be the centre of this cube. Then, the primitive code proposed on Section \ref{ssec:primitive} displaced to be concentric with the cube, $\vec o_{s_0}:=Q$, can be reused. Notice that displacing the primitive code does not change its size or the error analysis provided in Section~\ref{ssec:error_analysis}.

We need to be a bit more careful when calculating the amount of primitive words that are inside the (now smaller) cube of side $s$. The condition that the marginal distributions $T_i$ have to satisfy to be inside the cube is now $|T_i|\leq\frac{s}{2}$. Hence, in Equation~\eqref{eq:sub-gaussian_tail}, one needs to choose $t=\frac s2$. The following probability bound is thus produced, c.f.~\eqref{eq:out_cube_1Dcondition}:
\begin{equation}
\Pr\left\{|T_i|> \frac s2\right\}\leq 2\exp\left(-\frac{cs^2n^\delta}{8}\right).
\end{equation}

Then, following the exact same steps as in Section~\ref{ssec:opt_code}, one finds that the number $N$ of primitive words inside the cube of side $s$, which that can therefore be used for a reliable DI code over $B|_{[a,b]}$, satisfies:
\begin{equation}
N\geq N_P\left[1-2n\exp\left(-\frac{cs^2n^\delta}{8}\right)\right].
\end{equation}
Thus, $N\geq\frac12 n\log n\sum_{\ell=1}^L 2^{-\ell}-O(n\log\log n)$, since for any fixed $s>0$ the last term above is absorbed by the $O(n\log\log n)$ like we observed in Equation~\eqref{eq:size_step}. The linearithmic rate is therefore equal to Equation~\eqref{eq:linearithmic_rate}. Hence, repeating the same argument as in the proof of Theorem~\ref{thm:Bernoulli} we conclude that $\dot R=\frac12$ is an achievable rate. 

Finally, similarly to Cor.~\ref{cor:capacity}, $\dot R=\frac12$ being achievable implies that $\dot C_\text{{DI}}(B|_{[a,b]})\geq\frac12$, and as $\dot C_\text{{DI}}(B|_{[a,b]})\leq\frac12$, see \cite{CDBW:DI_classical}, it can be concluded that $\dot C_\text{{DI}}(B|_{[a,b]})=\frac12$.
\end{proof}

\begin{corollary}\label{cor:capacity_W}
The linearithmic DI capacity of any Bernoulli-reducible channel $W$ is bounded by $\dot C_\text{DI}(W)\geq\frac12$. Hence, for the particular family of channels $Q$ with continuous input, discrete output and $d_M=1$ described above, the Poisson channel $P$, and the Gaussian channel $G$ it holds 
\[
\dot C_\text{DI}(Q)=\dot C_\text{DI}(P)=\dot C_\text{DI}(G)=\frac12
\]

\end{corollary}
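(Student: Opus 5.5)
The proof splits into a lower bound obtained by reduction and matching upper bounds quoted from earlier work.

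For the lower bound, let $W$ be any Bernoulli-reducible channel, so that $W(\cX)$ contains a non-trivial continuous curve. Lemma~\ref{lemma:B_to_W} then supplies an interval $[a,b]\subset[0,1]$ with $a<b$ such that
\[
\dot C_\text{DI}(W)\geq\dot C_\text{DI}(B|_{[a,b]}).
\]
Theorem~\ref{thm:Restricted_Bernoulli} gives $\dot C_\text{DI}(B|_{[a,b]})=\frac12$ for every non-degenerate interval, hence $\dot C_\text{DI}(W)\geq\frac12$. The point to state explicitly is that the simulation preserves DI codes. Pre-processing maps each code word of $B|_{[a,b]}$ to an input sequence of $W$. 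Post-processing $V^{\otimes n}$ pulls each decoding set $\cD_i\subset\{0,1\}^n$ back to the region $\{y^n: (V^{\otimes n}\mathbb 1_{\cD_i})(y^n)\}$. Since $V\circ W$ reproduces the Bernoulli statistics exactly, both error probabilities are unchanged. A stochastic $V$ yields a randomized decoder; I would either derandomize it by taking a deterministic binary map $V$ (an indicator of a set), which already makes $V\circ W$ non-constant along the curve, or note that the DI definition only requires deterministic encoding.

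For the second claim, I would check Bernoulli-reducibility case by case and then cite the converses.
\begin{itemize}
\item[(i)] Channels $Q$ whose image is a finite union of continuous curves plus a residual set of upper Minkowski dimension at most $1$ contain a non-trivial curve by assumption. The converse \eqref{eq:gap2_intro} with $d_M=1$ gives $\dot C_\text{DI}(Q)\le\frac12$.
\item[(ii)] For the Poisson channel with peak/average constraints, the output distributions $\mathrm{Poi}(x)$, $x\in[0,A]$, form a continuous curve in total variation. Take $V(y)=\mathbb 1[y\geq1]$, so that $p(x)=1-e^{-x}$ is continuous and strictly increasing and sweeps the interval $[0,1-e^{-A}]$.
\item[(iii)] For the Gaussian channel, take $V(y)=\mathbb 1[y\ge0]$, so that $p(x)=\Phi(x/\sigma)$ sweeps a non-trivial interval of values under a power constraint.
\end{itemize}
The converses $\dot C_\text{DI}(P),\dot C_\text{DI}(G)\le\frac12$ are cited from \eqref{eq:gap_intro}. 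Combining the lower and upper bounds yields equality in all three cases.

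The main obstacle is compatibility with the input constraints. Under an average power constraint, the code words of $B|_{[a,b]}$, pulled back through $p^{-1}$, must satisfy the cost bound on the $W$ side. I would handle this by choosing $[a,b]$ inside the image of a small sub-interval of inputs, e.g.\ $x\in[0,A']$ with $A'$ below the average power bound. Then every letter, and hence every code word, satisfies both the peak and the average constraint automatically. This is legitimate because Theorem~\ref{thm:Restricted_Bernoulli} holds for arbitrarily short intervals.
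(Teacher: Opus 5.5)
Your proof is correct and follows the same route as the paper: the lower bound comes from Lemma~\ref{lemma:B_to_W} combined with Theorem~\ref{thm:Restricted_Bernoulli}, and the matching upper bounds are quoted from \eqref{eq:gap_intro} and \eqref{eq:gap2_intro}. The paper leaves several details implicit that you make explicit, and all of them are sound: the binary post-processings for the Poisson channel ($p(x)=1-e^{-x}$) and the Gaussian channel ($p(x)=\Phi(x/\sigma)$), the choice of a deterministic $V$ so that decoders remain sets as in Definition~\ref{def:DI_code}, and the restriction to a small input sub-interval so that the power constraints hold.
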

\begin{proof}
The first part follows immediately from Lemma~\ref{lemma:B_to_W}. Indeed, since the capacity of the restricted Bernoulli channel is $\dot C_\text{DI}(B)=\frac12$, we have that for any Bernoulli reducible channel $W$,
\begin{equation}
\dot{C}_{\text{DI}}(W) \geq \dot{C}_{\text{DI}}(B|_{[a,b]})=\frac12.
\end{equation}

Then, since the family of channels $Q$, the Poisson channel, and the Gaussian channel are indeed Bernoulli-reducible, and we know a matching converse result --see Equation~\eqref{eq:gap_intro} for the Gaussian and the Poisson cases, and Equation~\eqref{eq:gap2_intro} for the family of channels $Q$--, necessarily, $\dot C_\text{DI}(Q,P,G)=\frac12$.
\end{proof}


\section{Special example: the Poisson channel}\label{sec:poisson}
In this section we study in more depth the particular case of the \emph{Poisson channel} $P$. While its capacity has already been established by Corollary~\ref{cor:capacity_W}. The argument there, however, is purely existential. It only guarantees the existence of capacity-achieving DI codes via pre- and post-processing but does not describe them. Due to the theoretical and practical importance of the Poisson channel, we provide in this section an explicit optimal construction for DI over the Poisson channel under a peak power constraint. 

The Poisson channel is fundamental model in information theory, which captures communication scenarios in which the receiver observes discrete event counts generated by an underlying intensity parameter. Specifically, given a non-negative input $x\in\RR_{\geq 0}$, the output $Y=\{0,1,2,\dots\}$ is a non-negative integer with probability
\begin{equation}\label{eq:poisson}
\Pr(Y=y|x)=P_x(y)=e^{-x}\frac{x^y}{y!}.
\end{equation}
This channel arises naturally in physical settings where information is carried through random arrivals of particles. Consequently, it has become a central paradigm for optical communication systems \cite{D:optical_coms,DP:optical_coms,MS:optical_coms}, particularly in photon-limited regimes where photon counting is the primary mode of reception; for molecular communications\cite{Review:molecular_coms,tutorial:molecular_coms}, where information is conveyed through the release and sensing of molecules; and in other goal-oriented sensing and communication architectures that produce count outputs \cite{Poisson-bits-through-queues,S:Poisson-like-process}.

Specifically, in the DI setting, the Poisson channel has been studied under average or peak power constraints \cite{DI-poisson_mc,CDBW:DI_classical} and even in models with memory and feedback \cite{DI-poisson-isi,ZCBD:DI_isi_feedback}. In all these cases however, the construction is based in the sub-optimal typicality arguments, therefore producing the gap between lower and upper rate and capacity bounds [as evidenced in Equation~\eqref{eq:gap_intro}]. 

We will construct the code modifying our geometrical construction. For this construction to work, one needs to show that a concentration property equivalent to Proposition~\ref{proposition:projective_property} (which is, in turn, equivalent to that in \cite[Prop.~2]{CDBW:Gauss}) holds. In other words, one needs to ensure that the probability of the projection of the noise onto any fixed direction being larger than some threshold is small. 

\begin{proposition}\label{proposition:poisson}
Let $\vec{v}\in\RR^n$ with $\|\vec v\|>0$, and let $\vec{Z}=(Z_1,\dots,Z_n)$ denote the centred noise induced by the Poisson channel, i.e.\ $Z_i:=Y_i-x_i$ where $Y_i\sim\mathrm{Pois}(x_i)$. Then, for any input sequence satisfying the peak-power constraint $x^n\in[0,A]^n$ and any $t>0$,
\[
\Pr\left\{\|\Pi_{v}\vec Z\|>t\right\}\leq 2\exp \left(-\frac32t+\frac{9}{4}A\right).
\]
\end{proposition}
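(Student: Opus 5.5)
The plan is to reuse the structure of the proof of Proposition~\ref{proposition:projective_property}, replacing Hoeffding's inequality (which needs bounded summands, and Poisson noise is unbounded) by a direct Chernoff bound with a \emph{fixed} tilting parameter $\theta=\tfrac32$. This fixed choice explains the linear-in-$t$ exponent and the additive $\tfrac94A$ in the statement.

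As before, set $\vec e=\vec v/\|\vec v\|$ and $S_v=\langle\vec Z,\vec e\rangle=\sum_i e_iZ_i$, so that $\|\Pi_v\vec Z\|=|S_v|$. Then $|e_i|\le 1$, $\sum_i e_i^2=1$, and the $Z_i$ are independent. It suffices to bound $\Pr\{S_v>t\}$ and $\Pr\{-S_v>t\}$ separately by $\exp(-\tfrac32t+\tfrac94A)$ and combine them with a union bound, which gives the factor $2$.

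For the upper tail, apply Markov's inequality to $e^{\theta S_v}$:
\[
\Pr\{S_v>t\}\le e^{-\theta t}\prod_{i=1}^n\EE\bigl[e^{\theta e_iZ_i}\bigr]
= e^{-\theta t}\exp\Bigl(\sum_{i=1}^n x_i\bigl(e^{\theta e_i}-1-\theta e_i\bigr)\Bigr).
\]
This uses the centred Poisson moment generating function $\EE[e^{uZ_i}]=\exp\bigl(x_i(e^u-1-u)\bigr)$. The key elementary step is the inequality $e^u-1-u\le u^2$ for all $|u|\le\tfrac32$. To verify it, note that $g(u)=(e^u-1-u)/u^2$ is increasing in $u$, so it suffices to check $u=\tfrac32$: there $e^{3/2}-\tfrac52\approx1.98\le\tfrac94$. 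For negative $u$ one even has $g(u)\le\tfrac12$.

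With $\theta=\tfrac32$ we have $u_i=\theta e_i\in[-\tfrac32,\tfrac32]$. Using $x_i\le A$ and $\sum_i e_i^2=1$, the exponent is at most
\[
\sum_i x_i\theta^2e_i^2\le A\theta^2=\tfrac94A,
\]
which yields $\Pr\{S_v>t\}\le\exp(-\tfrac32t+\tfrac94A)$. The lower tail is identical, applied to $-e_i$ in place of $e_i$, since the bound only used $|u_i|\le\tfrac32$.

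The only delicate point is the elementary inequality $e^u-1-u\le u^2$ on $[-\tfrac32,\tfrac32]$, i.e.\ the monotonicity of $g$. This can be checked via the series $g(u)=\sum_{k\ge0}u^k/(k+2)!$: for $u\ge0$ every term is nondecreasing, and for $u<0$ one has $g(u)\le g(0)=\tfrac12$ directly. Everything else is routine.
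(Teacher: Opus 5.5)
Your proposal is correct and follows the paper's proof essentially step for step: a Chernoff bound with the fixed parameter $\tfrac32$, the centred Poisson moment generating function $\exp\bigl(x_i(e^u-1-u)\bigr)$, the bound $e^u-1-u\le u^2$ combined with $x_i\le A$ and $\sum_i e_i^2=1$, and a union bound over the two tails. Your check of that elementary bound on all of $[-\tfrac32,\tfrac32]$ is slightly more careful than the paper, which states it only for $s\in[0,\tfrac32]$ even though the arguments $\tfrac32 e_i$ and the lower tail need negative values; for $u\le 0$ the bound follows, e.g., from $e^u\le 1+u+\tfrac{u^2}{2}$ rather than literally from the alternating series.
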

\begin{proof}
Similarly to what we did in the proof of Proposition~\ref{proposition:projective_property}, introduce an arbitrary non-null vector $\vec v$, with $\|\vec v\|>0$, normalized by $\vec u=\vec v/\|\vec v\|$. Then the orthogonal projection of the noise vector $\vec Z=(Z_1,\dots,Z_n)$ onto $\vec v$ is $\Pi_v\vec Z=\langle \vec Z,\vec u\rangle\vec u$, cf.~Equation~\eqref{eq:orth_projection}. 
Let $S_u:=\langle\vec Z,\vec u\rangle=\sum_{i=1}^nZ_iu_i$ be the normalized inner product, which is a sum of independent random variables $S_i=Z_iu_i$. In contrast to what we observed for the Bernoulli channel, however, these $S_i$ are unbounded (since, the noise for the Poisson channel can be arbitrarily large with positive probability). Therefore, one cannot apply Hoeffding's inequality. Instead, we use the Chernoff bound \cite{Chernoff_bound}, which given a $t>0$, and for any $\mu>0$, guarantees
\begin{equation}\label{eq:chernoff}
\Pr\{S_u\geq t\}=\Pr\{e^{\mu S_u}\geq e^{\mu t}\}\leq e^{-\mu t}\EE[e^{\mu S_u}].
\end{equation}
It remains to bound the moment generating function $\EE[e^{\mu S_u}]$. Start by noticing the following property of the Poisson distribution. For any $\beta\in\RR$, developing the expectation value first, and using the definition of the exponential function as the sum $e^\alpha=\sum_{k=1}^\infty\alpha^k/k!$ after, one gets
\begin{equation}
\EE[e^{\beta Y_i}]=e^{-x_i}\sum_{y=0}^\infty \frac{(e^{\beta}x_i)^y}{y!}=\exp[x_i(e^\beta-1)].
\end{equation}
Therefore, $\EE[e^{\beta Z_i}]=e^{-\beta_ix_i}\EE[e^{\beta Y_i}]=\exp[x_i(e^\beta-1-\beta)]$, and choosing $\beta:=\mu e_i$,
\begin{equation}\label{eq:poisson_step}
\!\!\!\EE[e^{\mu S_u}]\!=\!\prod_{i=1}^n\EE[e^{\mu u_iZ_i}]\!=\!\exp\!\left[\sum_{i=1}^nx_i(e^{\mu u_i}-1-\mu u_i)\!\right]\!.
\end{equation}
Use now the standard Bernstein-type bound $e^s-1-s \leq s^2$, this is a recurrently used tool in probability concentration inequalities \cite{Vershynin:high_dimensions_book} valid for $s\in[0,3/2]$. Then, since $x_i\leq A$ and $\sum_{i=1}^nu_i^2=1$, with $|u_i|\leq 1$, the sum in equation above becomes
\begin{equation}\label{eq:poisson_step2}
\sum_{i=1}^nx_i(e^{\mu u_i}-1-\mu u_i)\leq \sum_{i=1}^nx_i(\mu u_i)^2\leq\mu^2 A.
\end{equation}
Hence, putting together Equations \eqref{eq:chernoff}, \eqref{eq:poisson_step} and \eqref{eq:poisson_step2}, we obtain:
\begin{equation}
\Pr\{S_u\geq t\}\leq e^{-\mu t}\EE[e^{\mu S_u}] \leq \exp\left(-\mu t+\mu^2 A\right).
\end{equation}
Choosing $\mu=\frac32$, applying the same bound to $-S_u$, and by virtue of the union bound,
\begin{equation}
\Pr\left\{|S_u|\geq t\right\}\leq2 \exp\left(-\frac32 t+\frac94 A\right), 
\end{equation}
Hence, since we have that $\|\Pi_v\vec Z\|=\|S_u\vec u\|=|S_u|$, the fact $\Pr\left\{\|\Pi_{v}\vec Z\|>t\right\}=\Pr\left\{|S_u|\geq t\right\}$ completes the proof.
\end{proof}

Observe now the power of the hierarchical multi-layer construction defined in the previous section. As it is purely geometrical it can be reused to match the Poisson case.

\begin{theorem}\label{thm:Poisson}
Let $\lambda=\exp\left[\frac{3A}{2}\left(\frac32-\log n\right)\right]$, and fix $L\in \NN$. Then, there exists an $(n,N,L\lambda,\lambda)$ code for DI over the Poisson channel under the peak power constraint $A$ that achieves a linearithmic rate 
\(
\dot{R}=\frac{1}{2}.
\)
Hence, 
\(
\dot{C}_\text{DI}(P)=\frac12.
\)
\end{theorem}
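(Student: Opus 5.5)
We follow the proof of Theorem~\ref{thm:Restricted_Bernoulli} almost verbatim, with the cube of side $A$ replacing the cube of side $s=b-a$. We replace Proposition~\ref{proposition:projective_property} by Proposition~\ref{proposition:poisson} in the error analysis. The input space under the peak power constraint is $[0,A]^n$, whose centre is $Q=(\frac A2,\dots,\frac A2)$. We place the primitive code of Section~\ref{ssec:primitive} with $\vec o_{s_0}:=Q$, radii $r_\ell=n^{(1-\delta)/2^\ell}$ and $d=3\log n$. We use the same layered decoders $\cD_i=\cap_\ell\cD_i^{(\ell)}$ as in \eqref{eq:layer-decoder}, with threshold $\log n$ in each layer. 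Writing $Y_i=x_i+Z_i$ with the centred Poisson noise, the projective identity $\Pi_{s_{\ell-1}\to s_\ell}\vec o_{s^L}=\vec o_{s^\ell}$ and the separation \eqref{eq:concentration_property} are purely geometric, so they are unchanged.

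\textbf{Error analysis.} We repeat \eqref{eq:primitive_layer_error1} and \eqref{eq:primitive_layer_error2}. In both cases the layer error is bounded by $\Pr\{\|\Pi_{s_{\ell-1}\to s_\ell}\vec Z\|\geq\log n\}$. For the first type this is immediate. For the second type we use the reverse triangle inequality together with the separation $\|\Pi\vec o_{\tilde s^L}-\vec o_{s^\ell}\|\geq 2\log n$. Applying Proposition~\ref{proposition:poisson} with $t=\log n$ gives the bound $2\exp(-\frac32\log n+\frac94 A)$, which vanishes as $n\to\infty$ for fixed $A$. This should match the stated $\lambda$ up to bookkeeping of constants; if the constants do not match exactly, we can tune the Chernoff parameter $\mu$ (subject to $\mu|u_i|\le\frac32$) or raise the threshold. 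The union bound over the $L$ layers then gives $\lambda_1=L\lambda$, and the single distinguishing layer gives $\lambda_2=\lambda$.

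\textbf{Expurgation.} We repeat Section~\ref{ssec:opt_code} for the cube of side $A$. A uniform point on $S^{n-1}(R)$ with $R^2\le 2n^{1-\delta}$ has coordinates satisfying $\Pr\{|T_i|>\frac A2\}\le 2\exp(-cA^2n^\delta/8)$. By the union bound and averaging over Haar rotations, there is a rotation $U_0$ keeping at least $N_P(1-2ne^{-cA^2n^\delta/8})$ codewords inside $[0,A]^n$. In particular, all retained inputs are non-negative, so they are valid Poisson intensities. Proposition~\ref{proposition:primitive_code_size} then gives $\log N\ge\frac{n\log n}{2}\sum_{\ell\le L}\frac{1-\delta}{2^\ell}-O(n\log\log n)$. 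Letting $L\to\infty$ and $\delta\to0$ yields $\dot R=\frac12$, and combining with the converse in \eqref{eq:gap_intro} gives $\dot C_{\text{DI}}(P)=\frac12$.

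\textbf{Main obstacle.} The delicate point is the tail bound. Poisson noise is only sub-exponential, not sub-Gaussian, so the layer error decays only polynomially in $n$ rather than like $e^{-2(\log n)^2}$. We must also check that the Bernstein step $e^s-1-s\le s^2$ remains valid for negative $s$, which is where $u_i<0$ enters. If it does not, we would instead use $e^s-1-s\le \frac{s^2}{2}e^{|s|}$ with $|s|\le\mu$, which changes only the constants. Polynomial decay still suffices for vanishing errors, since $L$ is fixed.
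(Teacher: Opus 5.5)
Your capacity argument is sound, but it takes a slightly different route from the paper. You keep the unscaled Bernoulli parameters ($r_\ell=n^{(1-\delta)/2^\ell}$, $d=3\log n$, threshold $\log n$) and expurgate against the cube of side $A$ exactly as in Theorem~\ref{thm:Restricted_Bernoulli} with $s=A$. The paper instead multiplies the radii, $d$ and the decoding threshold by $A$. Its code is then the Bernoulli code dilated by $A$, so the angles $\theta_\ell$ and $N_P$ are unchanged and the expurgation is literally the side-$1$ computation. Both routes keep the retained words in $[0,A]^n$ and give rate $\frac12$, hence $\dot C_{\text{DI}}(P)=\frac12$. Your worry about the Bernstein step is unnecessary: $e^s-1-s\le s^2$ holds for every $s\le\frac32$, and for $s\le 0$ even $e^s-1-s\le s^2/2$, so negative $u_i$ are harmless.

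The gap is in the error parameter. With threshold $t=\log n$, Proposition~\ref{proposition:poisson} gives a per-layer error of at most $2\exp(\frac94A-\frac32\log n)$. The statement asks for $\lambda=\exp(\frac94A-\frac{3A}{2}\log n)$. The coefficient of $\log n$ in the exponent differs, so this is not bookkeeping of constants. For $A\le 1$ your code meets the claim up to a factor $2$, but for $A>1$ its error decays polynomially slower than the stated $\lambda$.

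Tuning $\mu$ is not the remedy, since the Bernstein step as used needs $\mu\le\frac32$. The fix is your last suggestion, done consistently: raise the threshold to $A\log n$ \emph{and} $d$ to $3A\log n$. Raising only the threshold would break the false-identification step, which needs projected separation at least $2t$. With both raised, Proposition~\ref{proposition:concentration} gives separation at least $3A\log n-\sqrt{6AL\log n}\ge 2A\log n$. The rate is unchanged, because $\log(2r_\ell/d)$ moves only by the constant $\log A$ (or not at all if you also scale the radii, as the paper does). Proposition~\ref{proposition:poisson} with $t=A\log n$ then gives $2\exp[\frac{3A}{2}(\frac32-\log n)]$. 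The leftover factor $2$ relative to the stated $\lambda$ is a slip that also appears in the paper's own proof.
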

\begin{proof}
Let us start just by multiplying the code parameters (the radii and the minimum projected distance $d$) by $A$ and noticing that the resulting code is just a scaled version of the code for the Bernoulli channel. Indeed, choosing $r_\ell=An^{(1-\delta)/2^\ell}$ and $d=3A\log n$, all the angles $\theta_\ell$ are the same as in the Bernoulli code proposed above, since they depend on the ratio $r_\ell/d$ and thus the scale factor $A$ cancels (cf.~Equation~\ref{eq:angles}). 

The size of the primitive code, calculated in Proposition~\ref{proposition:primitive_code_size}, is thus equal to the Bernoulli case, and so is the whole expurgation argument in Section~\ref{ssec:opt_code} where we removed the Bernoulli primitive words outside the input cube of side $1$, which is equivalent to expurgating a Poisson code (equal to the Bernoulli but scaled up by $A$) from the Poisson input cube of side $A$. We can conclude that the size $N_\text{Pois}$ of the $L$-layer code proposed in this paper satisfies (cf.~Equation~\eqref{eq:size_step}):
\begin{equation}
\log N_{\text{Pois}}\geq\frac{n\log n}{2}\sum_{\ell=1}^L\frac{1-\delta}{2^\ell}-O(n\log\log n).
\end{equation}
Since for any $\epsilon>0$ we can choose small $\delta>0$ and $L$ large enough such that there exists a threshold $n_0$ over which, for all $n>n_0$, we have $\dot R(n):=\log N_{\text{Pois}}/(n\log n)\geq\frac12-\epsilon$; we can conclude that $\dot R=\frac12$ is an achievable rate.

It only remains to perform the error analysis to ensure that the code above is reliable. Let us define the following scaled decoder at each layer
\begin{equation}\label{eq:layer-decoder-poisson}
\cD_{i}^{(\ell)}=\{\vec{y}\in\RR^n:\|\Pi_{s_{\ell-1}\to s_{\ell}}\vec{y}-\vec{o}_{s^\ell}\|\leq A\log n\}.
\end{equation}
Then, the layer-wise error bounds in Equations \eqref{eq:primitive_layer_error1} and \eqref{eq:primitive_layer_error2} become, for any two distinct messages $i\neq j\in[N_{\text{Pois}}]$ encoded respectively into the words $\vec o_{s^L}$ and $\vec o_{\tilde s^L}$,
\begin{align}
    \Pr(\cD_i^{(\ell)}|\vec o_{s^L})&=\Pr\left\{\|\Pi_{s_{\ell-1}\to s_{\ell}}\vec{Z}\|\leq A\log n\right\},\\
    \Pr(\cD_i^{(\ell)}|\vec o_{\tilde s^L})&\leq\Pr\left\{\|\Pi_{s_{\ell-1}\to s_{\ell}}\vec{Z}\|\geq A\log n\right\}.
\end{align}
Using Proposition~\ref{proposition:poisson} and defining $\lambda:=\exp\left[\frac{3A}{2}\left(\frac32-\log n\right)\right]$ one obtains $\Pr(\cD_i^{(\ell)}|\vec o_{s^L})\leq1-\lambda$ and $\Pr(\cD_i^{(\ell)}|\vec o_{\tilde s^L})\leq \lambda$. Thus, by the union bound and the same arguments as in Equations \eqref{eq:main_Pe1} and \eqref{eq:main_Pe2} we obtain that, for any two distinct messages $i\neq j\in[N_{\text{Pois}}]$,
\begin{equation}
    P_{e,1}(i)\leq L\lambda:=\lambda_1\,\,\text{and}\,\,P_{e,2}(i|j)\leq \lambda:=\lambda_2.
\end{equation}
Both $\lambda_1$ and $\lambda_2$ vanish as $n\to\infty$ so the code is reliable for DI over the Poisson channel. 

We complete the proof by noticing that, since $\dot R=\frac12$ is an achievable rate, then $\dot{C}_\text{DI}(P)\geq\frac12$; and since we know from \cite{CDBW:DI_classical}, cf.~Equation~\eqref{eq:gap_intro}, that $\dot{C}_\text{DI}(P)\leq\frac12$, we necessarily have $\dot{C}_\text{DI}(P)=\frac12$.
\end{proof}



\section{Rate-reliability tradeoff}\label{sec:RR}
Until now, we have focused on the linearithmic DI capacity by analysing the asymptotic performance of the proposed code at first order. The particular parameter choices, such as the layer radii $r_\ell=n^{(1-\delta)/2^\ell}$, the projective distance $d=3\log n$, and the acceptance region of the decoder in \eqref{eq:layer-decoder}, were designed to ensure that the linearithmic rate approaches $\frac12$ while both error probabilities vanish as $n\to\infty$.

Beyond the optimal rate performance characterised by capacity, it is important in many applications to understand the speed at which the error probabilities decay. In classical Shannon-style message transmission it is well known that, when communicating at rates $0\leq R<C_T$ below capacity, the error probability decays exponentially fast with block length, i.e.\ $P_e\approx 2^{-nE_T(R)}$ \cite{Shannon:TheoryCommunication,Gallager:ReliabilityBook}. In contrast, for DI, it was observed that linearithmic scaling is only possible when both error probabilities vanish sub-exponentially fast \cite{CDBW:Reliability-TCOM,RRGauss-arXiv}. Indeed, the error probabilities decay exponentially fast, i.e.\ $\lambda_1\sim 2^{-nE_1}$ and $\lambda_2\sim 2^{-nE_2}$ with constant exponents (actually, even if only one of the two does \cite{DI-steins}), then the identification rate reverts to the linear scale.

Initial rate-reliability studies of DI \cite{CDBW:Reliability-TCOM,RRGauss-arXiv,DI-steins} provided upper and lower bounds on the trade-off between rate and error decay. However, the achievability bounds were based on typicality-based constructions, and consequently there appeared a gap with respect to the converse bounds, already at leading order.

An analysis on the tradeoff between error and rates of the capacity-achieving Gaussian DI code (the geometrical construction that inspired the present work) was presented in \cite{CDBW:Gauss}. There, the lower bound on the rate-reliability tradeoff function was shown to match the upper bound at first order, thereby closing the previously observed leading-order gap. 

In this section we will study the tradeoff between rate and errors in the proposed construction for DI over channels $Q$ with $d_M=1$ continuous input and discrete output, showing that we can match at first order the upper bound (calculated in \cite{CDBW:Reliability-TCOM} and reproduced in Theorem~\ref{thm:RR_converse} below) in the regime of small error exponents. 
\medskip

Let us start with some necessary preliminaries. Given an $(n,N^{(n)},\lambda_1^{(n)},\lambda_2^{(n)})$ DI code, we define the error exponents $E_1(n)$, $E_2(n)$, and the rate $R(n)$ in the \emph{linear scale} as follows
\begin{equation}
\lambda_1^{(n)}\!=2^{-nE_1(n)}\!,\,\,\lambda_2^{(n)}\!=2^{-nE_2(n)}\!,\,\,R(n)\!=\!\frac{\log N^{(n)}}{n}.
\end{equation}
Notice that, as we need the errors to vanish in the asymptotic limit $\lim_{n\to\infty}\lambda_1^{(n)},\lambda_2^{(n)}=0$, it is necessary that the error exponents satisfy $E_1(n),E_2(n)=\omega(1/n)$.

\begin{theorem}[cf. Cor. 8 \cite{CDBW:Reliability-TCOM}]\label{thm:RR_converse}
Given a channel $Q$ with $d_M\!=\!1$ and for all $\eta>0$ there exists an $E_0>0$ such that for all $E_1(n),E_2(n)\geq E(n)$ with $E(n)\leq E_0$ and sufficiently large $n$ it holds
\[
R(n)\leq (1+\eta) \log\frac{2}{\sqrt{1-e^{-E(n)/2}}}.
\]
Moreover, in the regime of small $E(n)$ the expression above can be expanded as follows,
\[
R(n)\leq\frac{1+\eta}{2}\log\frac{8}{E(n)}+O[E(n)].
\]
\end{theorem}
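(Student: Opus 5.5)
The plan is the standard converse: exponentially small errors force the output distributions to be far apart, and a packing count then limits $N^{(n)}$. First, separation. Fix two distinct messages $i\neq j$. The decoding set $\cD_i$ is a test between $W^n_{u_i}$ and $W^n_{u_j}$ with type-I error $\le\lambda_1$ and type-II error $\le\lambda_2$. Any such test satisfies the Bhattacharyya (fidelity) inequality
\[
\sum_{y^n}\sqrt{W^n_{u_i}(y^n)W^n_{u_j}(y^n)}\le \sqrt{\lambda_1(1-\lambda_2)}+\sqrt{\lambda_2(1-\lambda_1)}\le 2\cdot 2^{-nE(n)/2},
\]
using $E_1(n),E_2(n)\ge E(n)$ (base conventions for $E$ are absorbed into constants). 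The Bhattacharyya coefficient factorises over letters as $\prod_k b_k$ with $b_k=\sum_y\sqrt{Q(y|u_{ik})Q(y|u_{jk})}$. By AM--GM, $\prod_k b_k\le(\frac1n\sum_k b_k)^n$, so
\[
\frac1n\sum_k b_k\le 2^{1/n}e^{-E(n)/2}.
\]

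Second, embedding. Map each input letter $x$ to $\phi(x)=\sqrt{Q(\cdot|x)}$, a unit vector in $\ell^2(\cY)$. The squared Euclidean distance is then $\|\phi(x)-\phi(x')\|^2=2-2b$. For the concatenated vectors $\Phi(u)=(\phi(u_1),\dots,\phi(u_n))$ the previous step gives, for all $i\neq j$,
\[
\|\Phi(u_i)-\Phi(u_j)\|^2\ge 2n\bigl(1-2^{1/n}e^{-E(n)/2}\bigr)=:\rho^2.
\]
Hence the $N^{(n)}$ points $\Phi(u_i)$ form a $\rho$-separated set inside the $n$-fold product of the set $\phi(\cX)$. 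The map $\phi$ is bi-Lipschitz-equivalent to the embedding of $Q(\cX)$ used to define $d_M$, so $\phi(\cX)$ also has Minkowski dimension $d_M=1$.

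Third, packing count. The balls of radius $\rho/2$ around the points are disjoint, so $N^{(n)}$ is at most the covering number of the product at scale $\rho/2$. Cover each factor $\phi(\cX)$ with balls of radius
\[
\varepsilon=\frac{\rho}{2\sqrt n}\approx\sqrt{\tfrac{1-e^{-E/2}}{2}}.
\]
Products of these covers give a cover of the product set at scale $\sqrt n\,\varepsilon=\rho/2$. Since $d_M=1$, for every $\eta>0$ there is $\varepsilon_0$ with $N_{\text{cov}}(\varepsilon)\le(K/\varepsilon)^{1+\eta/2}$ for $\varepsilon\le\varepsilon_0$. We choose $E_0$ so that $E(n)\le E_0$ keeps $\varepsilon\le\varepsilon_0$. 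This yields
\[
R(n)=\frac{\log N^{(n)}}{n}\le(1+\tfrac\eta2)\log\frac{K}{\varepsilon}+o(1).
\]
Because $\log(1/\varepsilon)\to\infty$ as $E_0\to0$, the constant $K$, the factor $\sqrt2$ and the $2^{1/n}$ correction can be absorbed by enlarging $\eta/2$ to $\eta$, possibly after shrinking $E_0$. This gives $R(n)\le(1+\eta)\log\frac{2}{\sqrt{1-e^{-E(n)/2}}}$. The expansion is routine: $1-e^{-E/2}=\frac E2+O(E^2)$, so $\log\frac{2}{\sqrt{E/2}}=\frac12\log\frac8E$, with an $O(E)$ correction.

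The main obstacle I expect is the bookkeeping in the third step. The product cover of $n$ factors at per-coordinate scale $\varepsilon$ has Euclidean radius only $\sqrt n\,\varepsilon$, and one must check that this matches $\rho/2$ without losing a dimension-dependent factor. One also has to keep the exact constant $2$ inside the logarithm, and make sure the Minkowski-dimension estimate is uniform in $n$. That uniformity holds because it is applied only to the fixed single-letter set $\phi(\cX)$.
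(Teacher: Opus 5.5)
The paper does not prove this statement; it quotes it from Cor.~8 of \cite{CDBW:Reliability-TCOM}. Judged on its own, your argument has a genuine gap, and it sits in the first step, not the third.

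The inequality $\prod_k b_k\le(\frac1n\sum_k b_k)^n$ is true, but it bounds the product from above by the mean. Combined with another upper bound on the product, it says nothing about the mean; AM--GM only gives $\frac1n\sum_k b_k\ge(\prod_k b_k)^{1/n}$. Your conclusion is in fact false. For the Bernoulli channel, take two codewords that agree except in one letter, where one has input $0$ and the other input $1$. The $n$-letter Bhattacharyya coefficient is $0$, so this pair meets the error requirement for every exponent. Yet $\|\Phi(u_i)-\Phi(u_j)\|^2=2$, far below $\rho^2\approx nE(n)\to\infty$. A single perfectly distinguishing letter kills the product without moving the mean. So codewords of a good code need not be $\rho$-separated in $\ell^2$, and the packing count of your third step has nothing to act on. The $\sqrt n\,\varepsilon$ bookkeeping you worried about is actually fine.

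The repair keeps your product cover but uses it coordinatewise, where the product inequality points the right way. Cover $\phi(\cX)$ by $K(\varepsilon)$ balls of radius $\varepsilon$, and assign each codeword its cell sequence in $[K(\varepsilon)]^n$. Suppose two distinct codewords shared a cell sequence. Then $\|\phi(u_{ik})-\phi(u_{jk})\|\le2\varepsilon$ for \emph{every} $k$, so $b_k=1-\tfrac12\|\phi(u_{ik})-\phi(u_{jk})\|^2\ge1-2\varepsilon^2$ and $\prod_k b_k\ge(1-2\varepsilon^2)^n$. This contradicts your correct bound $\prod_k b_k\le2\cdot2^{-nE(n)/2}$ whenever $1-2\varepsilon^2>2^{1/n}2^{-E(n)/2}$. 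With $\varepsilon=\frac12\sqrt{1-2^{-E(n)/2}}$ this holds once $nE(n)>8$, i.e.\ for all large $n$ since $E(n)=\omega(1/n)$. Hence $N^{(n)}\le K(\varepsilon)^n$, i.e.\ $R(n)\le\log K(\varepsilon)$. Your covering estimate for $d_M=1$, together with absorbing constants into $\eta$, then finishes the proof and reproduces the stated form $\log\frac{2}{\sqrt{1-e^{-E/2}}}$ up to the base convention.

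A smaller point: $\phi$ being bi-Lipschitz to the parametrisation defining $d_M$ is automatic only if $d_M$ refers to the square-root image. With respect to total variation, $x\mapsto\sqrt{Q(\cdot|x)}$ is merely H\"older-$\tfrac12$ near the boundary of the simplex.
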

The theorem shows that if the error probabilities decay exponentially fast, i.e.\ $E(n)=\Theta(1)$, then the linear-scale rate $R(n)$ remains bounded, and linearithmic scaling is impossible. On the other hand, if $E(n)$ decreases with $n$, corresponding to sub-exponential error decay, then the upper bound on $R(n)$ grows with $n$.

The theorem above tells us that, if the errors vanish exponentially fast, i.e. $E(n)=\text{cnst.}$, the linear scale rate $R(n)$ remains bounded, and thus it is not possible to obtain a linearithmic scale on the rates (as we claimed in the beginning of this section). On the other hand, if $E(n)$ decreases with $n$, in other words, the errors decay sub-exponentially fast, the linear rate grows with $n$ and therefore diverges on the limit $n\to\infty$, indicating a faster scaling of the rate. 

In particular, in the regime of smallest possible error exponents $E(n)=\omega(1/n)$ (the errors vanish very slowly) one observes that $R(n)\leq\frac{1+\eta}{2}\log n+O(1)$. Which implies that the linearithmic rate satisfies $\dot R(n)\leq\frac{1+\eta}{2}+O(\frac{1}{\log n})$. Hence, since the last term vanishes for increasing $n$ and $\eta$ can be chosen arbitrarily small, we recover the $\frac12$ leading factor on the linearithmic rate that defines the capacity upper bound. 

\begin{theorem}\label{thm:RR_Bern}
Let $\delta>0$ and $L>0$ be fixed constants, and $\eta(L)>0$ a vanishing function on $L$ (independent of $n$). Then, given any Bernoulli-reducible channel channel $W$ continuous input and discrete output one can construct an $(n,N,L2^{-nE(n)},2^{-nE(n)})$ DI code for all $E(n)<1/\delta\log n$ with linear rate bounded by
\[
R(n)\geq\frac{1-\eta(L)}{2}\log\left(\frac{1}{E(n)}-\delta\log n\right)-O(1).
\]
\end{theorem}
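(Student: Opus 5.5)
The plan is to rerun the construction of Section~\ref{ssec:primitive} with error-dependent parameters instead of the fixed choices $d=3\log n$ and decoder threshold $\log n$. For Bernoulli-reducible $W$, Lemma~\ref{lemma:B_to_W} reduces everything to $B|_{[a,b]}$. The reduction is a fixed memoryless pre-/post-processing, so it preserves $n$, $N$ and both error probabilities. Hence it suffices to build, for $B|_{[a,b]}$, a code with $\lambda_1\le L2^{-nE(n)}$ and $\lambda_2\le 2^{-nE(n)}$ and the claimed rate.

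\textbf{Step 1: choose the decoder threshold and $d$.} Let the layer threshold be $\tau$ instead of $\log n$. By Proposition~\ref{proposition:projective_property}, each layer test fails under the true word with probability at most $2e^{-2\tau^2}$. For a wrong word differing at layer $\ell$, we need the projected separation to be at least $2\tau$. Then, by the reverse-triangle argument leading to \eqref{eq:primitive_layer_error2}, the false acceptance probability is again at most $2e^{-2\tau^2}$. To make $2e^{-2\tau^2}\le 2^{-nE(n)}$ we take $\tau^2\asymp nE(n)$ (up to constants and the additive $\log 2$). By Proposition~\ref{proposition:concentration}, the separation is $d-\sqrt{2Ld}$, so it suffices to take $d=2\tau+\sqrt{2Ld}$, i.e.\ $d=(2+o(1))\tau=\Theta(\sqrt{nE(n)})$. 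The union bound over layers, as in \eqref{eq:main_Pe1} and \eqref{eq:main_Pe2}, then gives the $(n,N,L2^{-nE(n)},2^{-nE(n)})$ guarantee.

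\textbf{Step 2: radii and size.} Keep $r_1=n^{(1-\delta)/2}$ and $r_\ell=r_{\ell-1}^{1/2}$ as before, so that $R^2\le 2n^{1-\delta}$. The expurgation of Section~\ref{ssec:opt_code} and Theorem~\ref{thm:Restricted_Bernoulli} is unchanged: it costs only a factor $1-2n\exp(-cs^2n^\delta/8)$, i.e.\ $O(1)$ in rate. Proposition~\ref{proposition:primitive_code_size} gives $\log N_\ell\ge\frac{n-\ell+1}{2}\log(2r_\ell/d)$. Dividing by $n$, the dominant layer $\ell=1$ contributes $\frac12\log(r_1^2/d^2)^{1/2}$ up to constants. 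With $d^2\asymp nE(n)$ and $r_1^2=n^{1-\delta}$, the ratio $r_1/d$ equals $\big(n^{-\delta}/E(n)\big)^{1/2}$ up to constants. Layer $\ell$ contributes a $2^{-(\ell-1)}$-fraction of $\log r_1$ but the same $-\log d$. Summing the geometric weights yields $(1-\eta(L))$ times the layer-one exponent, with $\eta(L)=2^{-L}$. This produces an expression of the form $\frac{1-\eta(L)}{2}\log(\cdots)-O(1)$.

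\textbf{Main obstacle.} The hard part is the bookkeeping that converts $\log(n^{-\delta}/E(n))$, together with the deeper layers, into exactly the stated $\log\big(\frac1{E(n)}-\delta\log n\big)$. It also requires keeping every layer's $r_\ell/d$ above $1$, which is where a condition like $E(n)<1/(\delta\log n)$ should enter. The first thing I would try is to tune $r_1$ and $\delta$ jointly against $E(n)$ and track the logs, rather than fixing $r_1=n^{(1-\delta)/2}$. I would also verify that deeper layers whose radii fall below $d$ can be dropped, with the resulting loss absorbed into $\eta(L)$ and the $O(1)$ term. A further check is that $\sqrt{2Ld}\ll d$ still holds in the regime $d=\Theta(\sqrt{nE(n)})$, which requires $nE(n)\to\infty$. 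This is guaranteed by $E(n)=\omega(1/n)$.
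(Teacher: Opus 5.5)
There is a genuine gap. Your Step~1 matches the paper: acceptance radius $t\asymp\sqrt{nE(n)}$, required projected separation $2t$, and $d$ proportional to $t$ (the paper takes $d=3t$ and demands $\Delta\le t$).

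The problem is in Step~2. You keep the capacity-construction radii $r_\ell=r_{\ell-1}^{1/2}=n^{(1-\delta)/2^\ell}$, although $d$ now grows like $\sqrt{nE(n)}$. Then $\log(2r_\ell/d)=2^{1-\ell}\log r_1-\log d+1$. Only $\log r_1$ is geometrically weighted, while every layer pays the full $\log d$. The layer sum is therefore $(2-2^{1-L})\log r_1-L\log d+L$, not $(1-\eta(L))\log(r_1^2/d^2)$. When $d$ is polynomial in $n$, this costs a first-order fraction of the rate.

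For example, take $E(n)=n^{-\gamma}$ with $\delta<\gamma<(1+\delta)/2$. Then $2r_\ell\le 2n^{(1-\delta)/4}<d$ for every $\ell\ge2$. The projective distance in Definition~\ref{def:angle-dense} never exceeds $2r_\ell$, so every deeper arrangement contains a single point. The code is effectively one layer, with $R(n)\approx\frac{\gamma-\delta}{4}\log n$ instead of $\frac{1-\eta(L)}{2}(\gamma-\delta)\log n$. A factor-two loss can be absorbed neither into $\eta(L)$, which must vanish as $L\to\infty$, nor into $O(1)$. So dropping the empty layers does not rescue the argument, and tuning $r_1$ or $\delta$ does not address it either.

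The missing idea is to retune the radius recursion to the new $d$. Your estimate $\Delta\le\sqrt{2Ld}$ is far below the slack you can afford, since $\Delta$ may be a constant fraction of $d$. That slack is exactly what pays for larger inner radii. The paper imposes $\sqrt L\,r_{\ell+1}\sin\theta_\ell=t$ with $\sin^2\theta_\ell\approx 2d/r_\ell=6t/r_\ell$. This gives $r_{\ell+1}=\sqrt{t\,r_\ell/(6L)}$, hence $r_\ell=a\,(r_1/a)^{2^{1-\ell}}$ with $a=t/(6L)$. Now $\log(2r_\ell/d)=2^{1-\ell}\log(r_1/a)-\log(9L)$, so the whole ratio is geometrically weighted. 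Summing Proposition~\ref{proposition:primitive_code_size} over the layers gives $\log N_P\ge\frac n2(1-2^{-L})\log\frac{1}{n^{\delta}E(n)}-O(n)$. With $r_1=n^{(1-\delta)/2}$ one still has $R^2\le 2n^{1-\delta}$ whenever $E(n)\le n^{-\delta}$, so the expurgation step is unchanged.

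On the final form you flagged: do not spend effort on that bookkeeping. The paper's per-layer line turns $\frac{1}{2^\ell}\log\frac{1}{n^\delta E(n)}=\frac{1}{2^\ell}\bigl(\log\frac{1}{E(n)}-\delta\log n\bigr)$ into $\frac{1}{2^\ell}\log\bigl(\frac{1}{E(n)}-\delta\log n\bigr)$. That is a misplaced parenthesis, not an identity. What this construction actually proves is $R(n)\ge\frac{1-\eta(L)}{2}\bigl(\log\frac{1}{E(n)}-\delta\log n\bigr)-O(1)$, which is non-trivial for $E(n)<n^{-\delta}$. The $\delta\log n$ comes from the outer radius fixed by the expurgation, and no bookkeeping inside the layers removes it.
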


\begin{proof}
We follow the main ideas in the proof of the rate-reliability lower bound for the Gaussian case \cite[Thm.~15]{CDBW:Gauss}. Start by observing that the error performance in Section~\ref{ssec:error_analysis} is dictated by the acceptance range $t$ of the decoders at each layer in Equation~\eqref{eq:layer-decoder}, by virtue of Proposition~\ref{proposition:projective_property}. There, we chose $t=\log n$ and observed that the errors could be bounded in terms of the object $\lambda=2\exp(-2t^2)$.

Instead, if one now chooses the decoding acceptance range to be $t=\sqrt{nE(n)}$, i.e.,
\begin{equation}
\cD_{i}^{(\ell)}=\left\{\vec{y}\in\RR^n:\|\Pi_{s_{\ell-1}\to s_{\ell}}\vec{y}-\vec{o}_{s^\ell}\|\leq\sqrt{nE(n)}\right\},
\end{equation}
the error probabilities at each layer [cf. Equations \eqref{eq:primitive_layer_error1} an \eqref{eq:primitive_layer_error2}] are bounded, using Proposition \ref{proposition:projective_property}, by the following object
\begin{equation}
2e^{-2nE(n)}\leq2^{-nE(n)}:=\lambda.
\end{equation}
In the last equality we have redefined $\lambda$. Then, the primitive errors across all layers are $\lambda_1:=L\lambda=2^{-nE(n)+\log L}$ and $\lambda_2:=\lambda=2^{-nE(n)}$. In other words $E_2(n)=E(n)$ and $E_1(n)=E(n)-\frac1n\log L$.

Notice also that for the second type of error to work [concretely, the step $(\gamma)$ in Equation \eqref{eq:primitive_layer_error2}] we need to make sure that Proposition \ref{proposition:concentration}, specifically Equation~\eqref{eq:concentration_property}, works as intended. In other words, we need,
\begin{equation}\label{eq:RR_concentration_property}
\|\Pi_{s_{\ell-1}\to s_\ell}\vec o_{\tilde s^L}-\vec o_{s^\ell}\|\geq 2t.
\end{equation}

Looking at Proposition \ref{proposition:concentration} one observes that the condition above is ensured by choosing $d:=3t$ and imposing $\Delta\leq t$. Let us choose $\sqrt{L}r_{\ell+1}\sin\theta_\ell=t$, then $\Delta\leq\sqrt{L}r_{\ell+1}\sin\theta_\ell\leq t$. Therefore, since in the regime of large $n$ one can approximate $\sin\theta_\ell\approx\sqrt{\frac{2d}{r_\ell}}$, we can equate the $\sin\theta_\ell$ in the last expressions to find:
\begin{equation}
\frac{6t}{r_\ell}=\frac{t}{r_{\ell+1}\sqrt{L}}\implies r_{\ell+1}=\sqrt{\frac{tr_\ell}{6L}}.
\end{equation}

Choosing now the radius at the first layer $r_1:=n^{\frac{1-\delta}{2}}$, we can extract the radius at each layer:
\begin{equation}\label{eq:RR_radius_layer}
    r_\ell=\frac{t}{6L}\left(\frac{n^{1-\delta}}{(t/6L)^2}\right)^{\frac{1}{2^\ell}}.
\end{equation}
Then, one can bound the amount of elements in the primitive code book using Proposition~\ref{proposition:primitive_code_size}. Indeed, at layer $\ell$ one finds
\begin{align}
\log N_\ell&\geq\frac{n-\ell+1}{2}\log\left[\frac{2t}{6L}\frac{1}{3t}\left(\frac{n^{1-\delta}}{(t/6L)^2}\right)^\frac{1}{2^\ell}\right]\nonumber\\
&=\frac{n-\ell+1}{2}\log\left[\frac{(6L)^{\frac{1}{2^{\ell-1}}}}{9L}\left(\frac{n^{1-\delta}}{nE(n)}\right)^\frac{1}{2^\ell}\right]\nonumber\\
&=\frac{n}{2}\log\left(\frac{1}{n^\delta E(n)}\right)^\frac{1}{2^\ell}-O(n)\nonumber\\
&=\frac{n}{2}\frac{1}{2^\ell}\log\left(\frac{1}{E(n)}-\delta\log n\right)-O(n)\label{RR_layer_size}.
\end{align}
Notice that for the bound above to be meaningful we need $E(n)<1/\delta\log n$.
Now, since the total amount of primitive words is given by $N_P=\prod_{\ell=1}^LN_\ell$, one obtains:
\begin{equation}\label{RR_primitive_size}
\log N_P\geq\frac{n}{2}\sum_{\ell=1}^L\frac{1}{2^\ell}\log\left(\frac{1}{E(n)}-\delta\log n\right)-O(n).
\end{equation}

Using the arguments in Section~\ref{ssec:opt_code}, we know there must exist at least one rotation $U_0\in SO(n)$ such that the number $N(U_0)$ of elements in a good reliable code for DI over the Bernoulli channel is bounded by $N(U_0)\geq N_P(1-2n\exp(-cn^\delta/8))$ for some constant $c>0$. Similarly, for the Bernoulli restricted to a continuous interval $[a,b]$, with $s:=b-a$, the number of elements can be bounded by $N_{[a,b]}(U_0)\geq N_P(1-2n\exp(-cs^2n^\delta/8))$ (see Section \ref{ssec:restricted_bern}). Since the negative term of the size correction is of order $\exp [-\text{poly}(n)]$, it is absorbed by the $O(n)$ term in Equation~\eqref{RR_primitive_size}. Therefore, using Lemma~\ref{lemma:B_to_W}, one can extract that the number $N$ of elements in a reliable $(n,N,L\lambda,\lambda)$ DI code, with $\lambda=2^{-nE(n)}$ is bounded by
\begin{equation}\label{RR_size}
\log N\geq\frac{n}{2}\sum_{\ell=1}^L\frac{1}{2^\ell}\log\left(\frac{1}{E(n)}-\delta\log n\right)-O(n).
\end{equation}

Finally, the linear rate $R(n):=\frac1n\log N$ is given by
\begin{equation}\label{RR_linear_rate}
R(n)\geq\frac{1}{2}\sum_{\ell=1}^L\frac{1}{2^\ell}\log\left(\frac{1}{E(n)}-\delta\log n\right)-O(1).
\end{equation}
The proof is completed by defining  $\eta(L):=1-\sum_{\ell=1}^L\frac{1}{2^\ell}>0$ and noticing that, since $\sum_{\ell=1}^L\frac{1}{2^\ell}$ can be arbitrarily close to 1 if we choose $L$ large enough, then $\eta(L)$ is a constant that, through the choice of $L$, can be selected arbitrarily small.  
\end{proof}

Notice that, in the regime $E(n)=o(1/\log n)$ the lower bound on the rate-reliability function in Theorem~\ref{thm:RR_Bern} matches at first order the upper bound in Theorem~\ref{thm:RR_converse}. Indeed, by choosing small values for the constants we find:
\begin{equation}
\begin{split}
R(n)&\geq\frac{1-\eta(L)}{2}\log\frac{1}{E(n)}-O[E(n)\log n],\\
R(n)&\leq\frac{1-\eta}{2}\log\frac{1}{E(n)}+O(1),
\end{split}
\end{equation}
In other words, since the constants $\eta(L)$ and $\eta$ can be chosen arbitrarily small, for any $\epsilon>0$ and sufficiently large $n$ one finds, for any Bernoulli reducible channel with $d_M=1$,
\begin{equation}
   \frac{1}{2}\log\frac{1}{E(n)}-\epsilon \leq R(n)\leq \frac{1}{2}\log\frac{1}{E(n)}+\epsilon.
\end{equation}

We have already mentioned that the rate-reliability construction above becomes trivial in the regime of exponentially-fast vanishing errors. Indeed, in Theorem~\ref{thm:RR_Bern}, we imposed the condition $E(n)<1/\delta \log n$, otherwise the linear rate becomes negative. However, we know from \cite{CDBW:Reliability-TCOM} that there exist lower bounds to the linear rate-reliability function for DI over the Bernoulli channel which, while they exhibit a gap with respect to the upper bound, they are positive in the whole range of errors $E(n)\in\left[O(1),\omega(\frac1n)\right)$. Hence, accepting also the constant error exponent (i.e.~exponentially fast vanishing error) regime. 

In other words, while the construction proposed above performs better in the regime $E(n)\in\left(\frac{1}{\delta\log n},\omega(\frac1n)\right)$, closing the gap observed in \cite{CDBW:Reliability-TCOM} at leading order, it gives us no meaningful information on the regimes $E(n)\geq1/\delta\log n$. Why does this happen? and why it does this effect not appear in the rate-reliability analysis performed for the similar Gaussian construction in \cite{CDBW:Gauss}?

Well, the restricting condition stems from the fact that the radius at each layer is $r_\ell=n^{(1-\delta)/2^\ell}$, while in the similar construction for the Gaussian channel we could choose $\tilde r_\ell=(cn)^{1/2^\ell}$ with $0<c<E$ a positive constant smaller than the Gaussian channel power constraint $E$. Notice that the radii $r_\ell$ used here is smaller by a polynomial factor $O(n^\delta)$, which is carried throughout the calculation generating the negative term $\delta\log n$ that makes the rate trivial for error exponents that are too large. As a matter of fact, one can calculate that the \emph{universal code} for DI over Gaussian channels proposed in \cite{CDBW:Gauss}, which uses the same radii $r_\ell=n^{(1-\delta)/2^\ell}$ would produce the same issue. 

Furthermore, notice that, for the Bernoulli channel we can not use the larger radii $\tilde r_\ell=(cn)^{1/2^\ell}$. That is because the resulting primitive code would suddenly fall mostly outside the cube and would therefore be unusable. Indeed, due to the application of the union bound in Equation~\eqref{eq:out_cube_Multi_D_condition}, the probability that a random point on the surface of the sphere of radius $\tilde R=\sqrt{\sum_{\ell=1}^L\tilde{r}_\ell^2}$ is also inside the input space cube diverges:
\begin{equation}
\!\!\Pr\left\{T\in \mathbf{T}_{\text{out}}\right\}\leq\sum_{i=1}^n\Pr\left\{|T_i|\!>\! \frac12\right\}\leq 2ne^{-k}\stackrel{n\to\infty}{\longrightarrow}\infty,
\end{equation}
where $k>0$ is some constant. So the polynomial radius reduction used in the construction here presented is necessary for the primitive code to be usable in Bernoulli channels and their reductions, but it causes a reduced applicability range on the rate-reliability functions.

In any case, one needs to notice that we are mostly interested in the regime of small error exponents, that is the valid range in Theorem~\ref{thm:RR_Bern}, because it is the only regime where we find a linearithmic scaling of the rates (the critical feature of DI that asymptotically improves upon Shannon-like settings). Indeed, as already mentioned previously, and as shown by the converse bound in Theorem~\ref{thm:RR_converse}, in the regimes where the error vanishes faster, the linearithmic scaling is lost.


\section{Conclusions and discussion}\label{sec:conclusions}
Building on top of the multi-layer geometric ideas introduced for the AWGN channel in \cite{CDBW:Gauss}, we have developed in this work an adapted geometrical construction for DI applicable to a large variety of channels and we have used it to resolve the long-standing rate and capacity gap in the linearithmic regime. In particular, we constructed an explicit DI code for the Bernoulli channel and proved that it achieves the known converse bound in \cite{CDBW:DI_classical}, thereby establishing the exact capacity
\begin{equation}
\dot C_{\text{DI}}(B)=\frac12.
\end{equation}
We then extended the construction to restricted Bernoulli channels $B|_{[a,b]}$ and, via a Bernoulli reduction argument, to any memoryless channel $W$ whose channel image contains a non-trivial continuous curve, thus establishing for this whole family that $\dot C_{\text{DI}}(W)\geq\frac12$. In particular, for the important family of channels $Q$ such that their image is the union of a finite number of continuous curves and a residual set of (upper) Minkowski dimension at most 1, for which the converse bound $\dot C_{\text{DI}}(Q)\leq \frac12$ was previously known \cite{CDBW:DI_classical}, our achievability result yields the exact capacity formula
\begin{equation}
    \dot C_{\text{DI}}(Q)=\frac12,
\end{equation}
thereby closing the persistent capacity gap identified in \cite{CDBW:DI_classical,CDBW:Reliability-TCOM} for channels with $d_M=1$. We also colsed the gap for the Poisson channel $\dot C_{\text{DI}}(P)=\frac12$ and further provided an explicit capacity-achieving construction under a peak-power constraint. The capacity results in \cite{CDBW:Gauss} are also reproduced by the generalization here.

Beyond capacity, we have also studied here the rate-reliability tradeoff associated with the proposed construction. We derived improved achievability bounds on the DI reliability function and showed that, in the regime of small error exponents, the leading-order behaviour matches the known converse bound, thereby closing gap observed at first order in the rate-reliability analysis \cite{CDBW:Reliability-TCOM,RRGauss-arXiv}.

The proposed optimal construction is not based on typicality arguments, but rather on high-dimensional geometric concentration effects. 
This confirms the intuition provided by previous approaches that identified typicality arguments as a probable source of the gap in capacity and rates.
Also, while the concentration methods initially proposed in \cite{CDBW:Gauss} could a priory be associated with isotropic noise models such as AWGN, our results show that the same approach can be adapted to channels where the noise is strongly input-dependent, such as the Bernoulli and Poisson channels. This suggests that geometric constructions may provide a unifying tool for DI over an even wider range of channel models.

Bernoulli-reducible channels do not exhaust all scenarios in which superlinear DI is possible. In particular, channels whose channel image has Minkowski dimension $d\neq 1$ (for instance, channels whose image contains higher-dimensional manifolds such as surfaces or volumes, or channels with fractal or otherwise disconnected image sets) may exhibit superlinear scaling behaviour governed by the $d_M$, as studied in \cite{CDBW:DI_classical}. Such channels lie outside the scope of the present paper: while our multi-layer geometric construction yields an achievable linearithmic rate of $\frac12$ whenever a Bernoulli reduction is possible, it is in general not optimal when the channel image has dimension $d>1$, since we know that the rates $d_M/4$ are achievable and the general converse bound scales as $d_M/2$. It remains an interesting open direction to determine whether multi-layer geometric constructions can be generalised to such higher-dimensional or irregular channel images, potentially leading to capacity-achieving schemes whose performance is characterised exactly in terms of Minkowski dimension.

Finally, notice the surprising universality of our capacity characterisation. In contrast to Shannon transmission, where capacity depends explicitly on metric and information-theoretic channel parameters (e.g., the signal-to-noise ratio in AWGN channels, or power constraints in Poisson channels), the DI capacity is largely insensitive to such details. Rather, it is governed by qualitative geometric features of the channel image in the space of output distributions. In particular, for the broad class of channels studied in this paper, the capacity is fixed at the universal value $\dot C_{\text{DI}}=\frac12$, independently of all other specific channel characteristics. Indeed, the channel parameters only influence lower-order terms, such as the precise error decay and finite block length behaviour.


\bibliographystyle{ieeetr}
\bibliography{ID}

\end{document}